%% file: ITW_26RM.tex
\documentclass[conference]{IEEEtran}
\IEEEoverridecommandlockouts
\usepackage{cite}
\usepackage{amsmath,amssymb,amsfonts}
\usepackage{algorithmic}
\usepackage{algorithm}
\usepackage{graphicx}
\usepackage{textcomp}
\usepackage{xcolor}
\def\BibTeX{{\rm B\kern-.05em{\sc i\kern-.025em b}\kern-.08em
    T\kern-.1667em\lower.7ex\hbox{E}\kern-.125emX}}
    
\usepackage{bm} 
\usepackage{amsthm} 

\usepackage{multicol}
\usepackage{multirow}
\usepackage{threeparttable}
\usepackage{dsfont}
\usepackage{makecell}
\usepackage{booktabs}
\usepackage{tikz}
\usepackage{pgfplots}
\pgfplotsset{compat=1.18}
\usepackage{flushend}

\newtheoremstyle{mythestyle}
  {3pt}   
  {3pt}   
  {\normalfont} 
  {}   
  {\itshape} 
  {:}   
  {0.5em} 
  {\thmname{#1}\thmnumber{ #2}\thmnote{ \textit{(#3)}}} 

\theoremstyle{mythestyle}
\newtheorem{theorem}{Theorem}  
\newtheorem{lemma}{Lemma}  

\newtheorem{definition}{Definition}

\newcommand{\Ft}{\mathbb{F}_2}

\newcommand{\Fq}{\mathbb{F}_q}
\newcommand{\Fqm}{\mathbb{F}_q^m}

\newcommand{\AG}{\operatorname{AG}}
\newcommand{\PG}{\operatorname{PG}}

\newcommand{\cG}{\mathcal{G}}
\newcommand{\cP}{\mathcal{P}}

\newcommand{\qbinom}[2]{\genfrac{[}{]}{0pt}{}{#1}{#2}_q}
\DeclareMathOperator{\rank}{rank}

\begin{document}

\title{Lifted Gabidulin Construction for LDPC Representations of Finite Geometry Codes}

\author{\IEEEauthorblockN{Yifei Shen}
\IEEEauthorblockA{
\textit{TCL, EPFL, Switzerland}\\
Email: yifei.shen@epfl.ch
}
\and
\IEEEauthorblockN{Andreas Burg}
\IEEEauthorblockA{
\textit{TCL, EPFL, Switzerland}\\
Email: andreas.burg@epfl.ch
}
}

\maketitle

\begin{abstract}
Finite geometry (FG) codes combine the algebraic properties of classical block codes with the iterative belief propagation (BP) decoding ability of low-density parity-check~(LDPC) codes. 
However, exploiting both advantages in practice is hindered by the fact that the standard incidence matrix between $(\mu+1)$-flats and points is dense and contains many short cycles for any flat dimension $\mu\geq 1$. 
In this work, we propose to sparsify the decoding matrix based on pencil selection, formulated as a constant-dimension subspace packing problem and solved explicitly using lifted Gabidulin codes. 
For both affine and projective geometries, sparse parity-check matrices are constructed and verified for FG codes of lengths up to $1024$.
Simulations on four FG codes show no visible error floor and around $0.5$~dB gain over corresponding 5G LDPC codes at a block error rate of $10^{-7}$.
\end{abstract}

\begin{IEEEkeywords}
Low-density parity-check (LDPC), finite geometry codes, Gabidulin codes, Reed-Muller, belief propagation.
\end{IEEEkeywords}

\section{Introduction}\label{sec:intro}
Coding with short blocklengths has emerged as a key problem in next-generation communication systems~\cite{miao2024trends}.
In this regime, the asymptotic advantages of modern low-density parity-check~(LDPC) and polar codes diminish, while classical algebraic codes remain competitive due to their large minimum distances~\cite{shirvanimoghaddam2018short}. Compared to practical Bose-Chaudhuri-Hocquenghem (BCH) codes, finite geometry (FG) codes~\cite{rudolph1964,smith1967,weldon1967euclidean} provide a compelling trade-off between complexity and performance due to their reasonably large minimum distances and simple majority-logic decoding~\cite{LinCostello2004}. These properties motivated extensive work on FG codes in the late 1960s and 1970s~\cite{delsarte1969geometric,hartmann1972generalized,lin1973multifold,hartmann1974structure}. Beyond their historical success with hard-decision decoding, their rich automorphism groups and highly regular geometric structures also make FG codes amenable to modern iterative soft-decision decoding~\cite{delsarte1970generalized,huang2009two,chen2018finite,yu2025ordered}.

Similar to LDPC codes, FG codes also experienced a re-discovery three decades after their inception. Kou \emph{et al.}~\cite{kou2002low} demonstrated that the point-line incidence structure of FG codes naturally yields $4$-cycle-free Tanner graphs, which allows these FG-LDPC codes to be efficiently decoded by the belief propagation (BP) algorithm. This code family was further broadened by generalizing the geometric incidence rule and requiring the dimensions of the flats indexing rows and columns to differ by exactly one~\cite{tang2005codes}. Compared to heuristically constructed LDPC codes, FG-LDPC codes possess large minimum distances and natural cyclic or quasi-cyclic structures~\cite{kamiya2006quasi,kamiya2007high}. Moreover, analyses of the trapping sets and pseudo-codewords of FG-LDPC codes~\cite{smarandache2007pseudo,liu2012smallest} reveal that they are unlikely to suffer from severe error floors, making them particularly suitable for the short-blocklength regime. However, the availability of FG codes that possess naturally $4$-cycle-free Tanner graphs is strictly limited, while incidence matrices of most short FG codes contain numerous $4$-cycles.

To reduce the complexity of dense Tanner graphs, we can introduce auxiliary variable nodes (AVNs) to split check nodes and therefore sparsify the graph~\cite{mackay2000relationships,yedidia2002generating,sankaranarayanan2005iterative,kumar2005graphical,shen2025toward}. 
Our recent work introduced a general methodology for extracting sparse parity-check matrices for arbitrary short linear block codes. This approach progressively selects parity-check row blocks (PCRBs) from a vast search space of low-weight dual codewords, which results in structured, sparse PCMs (ssPCMs)~\cite{shen2025belief}. Iterative BP decoding on ssPCMs shows significant performance gains over the original PCMs, as verified for most short codes with lengths up to $128$. However, the search space of candidate parity checks grows rapidly with blocklength, making the matrix extraction procedure eventually prohibitively expensive.

In this paper, we propose a geometric approach to derive sparse LDPC representations of FG codes. Our construction relies on the notion of a pencil, which is the set of all $(\mu+1)$-flats containing a fixed $\mu$-flat, which yields a cycle-free PCRB after introducing one AVN. Therefore, the construction of an ssPCM for FG codes reduces to a pencil selection problem. Inspired by the constant-dimension codes from network coding theory~\cite{Silva2008}, we solve this problem using the lifted Gabidulin construction~\cite{Gabidulin1985}. For both affine and projective geometries, we provide explicit constructions and verify rank preservation for FG codes of lengths up to $1024$. Simulations on flooding BP decoding of four FG codes show no visible error floor and around $0.5$~dB gain over corresponding 5G LDPC codes at a block error rate~(BLER) of $10^{-7}$. 
\section{Finite Geometry}\label{sec:prelim} 

\emph{Notation:} Let $m$ and $s$ be positive integers, and let $\Fq$ denote the finite field of order~$q=2^s$. We consider two families of binary FG codes associated with affine geometry $\AG(m,q)$ and projective geometry $\PG(m,q)$ over~$\Fq$. For a fixed integer~$\mu$ with $0\le\mu\le m-1$, the FG($m,s,\mu$) code is defined as the null space over $\mathbb{F}_2$ of the point--$(\mu+1)$-flat incidence matrix.
We write $\qbinom{x}{y}$ for the Gaussian binomial coefficient over~$\Fq$. For two subspaces $U,V$, their distance is defined as 
$d_{\rm S}(U,V)=\dim(U)+\dim(V)-2\dim(U\cap V)$. The $\Ft$-row span of a binary matrix $\mathbf{H}$ is denoted as $\mathrm{row}_{\Ft}(\mathbf{H})$. 

\subsection{Incidence Matrices of Finite Geometry}
An incidence matrix is indexed by $(\mu+1)$-flats~(rows) and points~(columns), with entry $1$ whenever a point belongs to the corresponding flat. Each row is an indicator vector of a $(\mu+1)$-flat, with weight equal to the number of points in that flat. These rows are minimum-weight dual codewords of the corresponding FG code~\cite{delsarte1970generalized}. 

The affine geometry $\AG(m,q)$ has point set $\cP=\Fq^m$. An affine $(\mu+1)$-flat is a coset $F=\bm{z}+U$, where $\bm{z}\in\Fq^m$, $U\le\Fq^m$, and $\dim(U)=\mu+1$. A $(\mu+1)$-flat contains $q^{\mu+1}$ points. Hence the AG incidence matrix~$\mathbf{H}_{\mathrm{AG}}$ has $q^m$ columns indexed by points and $q^{m-\mu-1}\qbinom{m}{\mu+1}$ rows indexed by all affine $(\mu+1)$-flats, each of weight $w_r=q^{\mu+1}$. When $q=2$, codes defined by $\AG (m,q)$ are Reed-Muller codes~\cite{delsarte1970generalized}.

The projective geometry $\mathrm{PG}(m,q)$ can be viewed as the completion of $\mathrm{AG}(m,q)$ by adding a hyperplane at infinity. Its point set
$\cP$ equals the one-dimensional subspaces of~$\Fq^{m+1}$. 
In this paper, we use $\mathbb{P}(W)$ to represent the projectivization of a subspace~$W$. If $\dim(W)=\mu+1$, then $\mathbb{P}(W)$ is a projective $\mu$-flat in $\PG (m,q)$. 
The PG incidence matrix $\mathbf H_{\mathrm{PG}}$ has $\qbinom{m+1}{1}$
columns indexed by points and 
$\qbinom{m+1}{\mu+2}$
rows indexed by all projective $(\mu+1)$-flats, each of weight
$w_r=\qbinom{\mu+2}{1}$. 

\subsection{Pencils in Finite Geometry}
Following the terminology of geometry, a pencil is a family of geometric objects with a common property~\cite{halsted1896synthetic}. In our context, a pencil refers to the set of all $(\mu+1)$-flats containing a given $\mu$-flat~$\Pi$.
\begin{definition}[Pencil]\label{def:pencil}
Let $\mathcal{F}_{\mu+1}$ denote the set of all $(\mu+1)$-flats in either $\AG(m,q)$ or $\PG(m,q)$. For a $\mu$-flat $\Pi$, the \emph{pencil} centered at~$\Pi$ is
\begin{equation}\label{eq:pencil}
\cG(\Pi) \triangleq
\{F \in \mathcal{F}_{\mu+1}: \Pi \subseteq F\}.
\end{equation}
\end{definition}

\begin{lemma}\label{lem:pencil_prop}
Let $\Pi$ be a $\mu$-flat, then 
\begin{equation}
|\cG(\Pi)| = J \triangleq \qbinom{m-\mu}{1} = (q^{m-\mu} - 1)/(q-1).
\end{equation}
For any two distinct $F_1, F_2 \in \cG(\Pi)$, $F_1 \cap F_2 = \Pi$. 
\end{lemma}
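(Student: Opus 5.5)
The plan is to reduce both the affine and the projective case to one linear-algebra statement: the subspaces of a fixed dimension that contain a given subspace correspond bijectively to the one-dimensional subspaces of a quotient space. Both claims of the lemma will then follow from this correspondence.

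\emph{Projective case.} Write $\Pi=\mathbb{P}(W)$ with $W\le\Fq^{m+1}$ and $\dim W=\mu+1$. A projective $(\mu+1)$-flat $F=\mathbb{P}(V)$ contains $\Pi$ exactly when $W\le V$, with $\dim V=\mu+2$. The map $V\mapsto V/W$ is a bijection between such $V$ and the one-dimensional subspaces of the quotient $\Fq^{m+1}/W$. This quotient has dimension $m-\mu$, so $|\cG(\Pi)|=\qbinom{m-\mu}{1}$. For the intersection, take distinct $V_1\neq V_2$, both containing $W$. Then $V_1\cap V_2$ contains $W$ and is a proper subspace of $V_1$, and since $\dim V_1=\dim W+1$ this forces $V_1\cap V_2=W$. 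Because projectivization preserves intersections, $F_1\cap F_2=\mathbb{P}(W)=\Pi$.

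\emph{Affine case.} Write $\Pi=\bm z+W$ with $\dim W=\mu$. An affine $(\mu+1)$-flat $F=\bm z'+U$ contains $\Pi$ if and only if $\bm z\in F$ and $W\le U$. In that case we may take $\bm z'=\bm z$, so $F=\bm z+U$ with $W\le U$ and $\dim U=\mu+1$. Conversely, every such $U$ gives a flat in the pencil, and distinct $U$ give distinct flats, because the direction space of a coset is uniquely determined by the coset. Counting these $U$ via $U/W$ inside $\Fq^m/W$, which has dimension $m-\mu$, again yields $J$. For $F_1=\bm z+U_1$ and $F_2=\bm z+U_2$ with $U_1\neq U_2$, we have $F_1\cap F_2=\bm z+(U_1\cap U_2)$, and the same dimension argument as before gives $U_1\cap U_2=W$.

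\emph{Main obstacle.} The only delicate step is justifying, in the affine setting, that containment $\Pi\subseteq F$ forces $W\le U$ and lets us re-base the coset at $\bm z$. Both facts follow by subtracting points: for $\bm w\in W$, the element $\bm w=(\bm z+\bm w)-\bm z$ is a difference of two points of $F$, hence lies in $U$. The final count $(q^{m-\mu}-1)/(q-1)$ is the standard number of one-dimensional subspaces of an $(m-\mu)$-dimensional space: there are $q^{m-\mu}-1$ nonzero vectors, and each line contains $q-1$ of them.
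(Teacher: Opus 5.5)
Your proof is correct and follows essentially the same route as the paper. Both reduce the affine and projective cases to the bijection between subspaces of dimension $\dim W+1$ containing $W$ and one-dimensional subspaces of the $(m-\mu)$-dimensional quotient, and both get the intersection claim from the fact that two distinct such subspaces meet exactly in $W$. You additionally spell out steps the paper leaves implicit: re-basing the coset at $\bm z$, the identity $(\bm z+U_1)\cap(\bm z+U_2)=\bm z+(U_1\cap U_2)$, and $\mathbb{P}(V_1)\cap\mathbb{P}(V_2)=\mathbb{P}(V_1\cap V_2)$.
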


\begin{proof} 
  For $\mathrm{AG}(m,q)$, denote the kernel as $\Pi=\bm{z}+W$, where $W\le \mathbb F_q^m$ and $\dim(W)=\mu$. Since $\bm{z}\in\Pi$, any affine $(\mu+1)$-flat containing~$\Pi$ admits $\bm{z}$ as a coset representative, hence has the form $\bm{z}+U$, where $W<U\le \mathbb F_q^m$ and $\dim(U)=\mu+1$. For $\mathrm{PG}(m,q)$, write $\Pi=\mathbb P(W)$ with $W\le\mathbb F_q^{m+1}$ and $\dim(W)=\mu+1$. Every projective $(\mu+1)$-flat containing $\Pi$ is $\mathbb P(U)$, where $W<U\le\mathbb F_q^{m+1}$ and $\dim(U)=\mu+2$. 
  
  In both cases, such flats are in one-to-one correspondence with the one-dimensional subspaces of the quotient space ($\mathbb F_q^m/W$ for $\AG(m,q)$ and $\mathbb F_q^{m+1}/W$ for $\PG(m,q)$), whose number is $J\!=\!\qbinom{m-\mu}{1}$. Since distinct one-dimensional subspaces of the quotient space intersect trivially, the corresponding $(\mu+1)$-flats intersect exactly in the common kernel $\Pi$.
\end{proof}

The number of possible kernels, i.e., the total number of centered full pencils, is $q^{m-\mu}\qbinom{m}{\mu}$ in $\AG(m,q)$ and $\qbinom{m+1}{\mu+1}$ in $\PG(m,q)$, respectively.

\section{Matrix Sparsification Based on Pencils}\label{sec:avn}
As shown in~\cite{tang2005codes}, BP decoding on the incidence matrix of an FG code can provide good soft-decision performance. When $\mu=0$, since two distinct lines intersect in at most one point, the point-line incidence matrix is $4$-cycle-free. However, the incidence matrix is not a suitable BP representation when $\mu\geq 1$. Although its rows are minimum-weight dual codewords, they are highly redundant and induce a dense Tanner graph with many $4$-cycles. Moreover, with $q^m\qbinom{m}{\mu+1}$ edges for $\AG(m,q)$ and $\qbinom{m+1}{\mu+2}\qbinom{\mu+2}{1}$ edges for $\PG(m,q)$, the computational complexity of the BP algorithm is very high. 

\subsection{From Pencil to Parity-Check Row Block}
Our prior work proposed to sparsify dense parity-check matrix based on PCRBs~\cite{shen2025belief}. A PCRB is a collection of parity checks whose supports share a common subset of coded-bit columns. For each PCRB, one AVN is added to split each original parity check by replacing the common subset with the AVN. The remaining supports of the checks in a PCRB are pairwise disjoint, then the Tanner subgraph induced by the block is \mbox{$4$-cycle-free}. 
For a given short code, the extraction of PCRBs from a large set of low-weight dual codewords becomes increasingly expensive as the blocklength grows.

However, for FG codes, the pencil structure provides a natural candidate for PCRBs. For a $\mu$-flat $\Pi_i$, every \mbox{$F\in\cG(\Pi_i)$} contains the same kernel $\Pi_i$, and the $J$ incidence rows indexed by $\cG(\Pi_i)$ form the original rows of a PCRB. We introduce one AVN~$v^\prime$ and impose the defining check
\begin{equation}\label{eq:avn}
v^\prime \oplus \bigoplus_{P\in\Pi_i}v_{P}=0,
\end{equation}
where $v_P$ represents the coded bit indexed by point~$P$.
For each $F$, the original check is replaced by
\begin{equation}\label{eq:avn_check}
v^\prime \oplus \bigoplus_{P\in F\setminus\Pi_i}v_{P}=0 .
\end{equation}
The replacement preserves exactly the same constraints on the original coded-bit variables. 
The sparsification results from removing the common kernel from every row in the pencil. For $\AG(m,q)$, each rewritten incidence row has weight $q^\mu(q-1)+1$ and the newly added row has weight $q^\mu+1$. For $\PG(m,q)$, each rewritten incidence row has weight $q^{\mu+1}+1$ and the newly added row has weight $\qbinom{\mu+1}{1}+1$. 

Since distinct rows in a full pencil intersect only on $\Pi_i$, their rewritten supports share only $v^\prime$ and the local pencil block is therefore cycle-free. As a result, the AVN is connected to every coded bit through one of the $J+1$ check rows. This high-degree AVN therefore receives reliable channel information from many coded bits after the first iteration. Moreover, by Theorem~1 of~\cite{shen2026hdpc}, the size of any absorbing set~\cite{dolecek2009analysis} involving an AVN admits a lower bound that grows with the PCRB height. Since full pencils attain the maximum possible height, the ssPCM for FG codes is free of overly small absorbing sets.
 
\subsection{Reformulation of ssPCM Construction as Pencil Selection}
Given a full pencil that induces one cycle-free PCRB, it remains to select enough pencils so that the resulting ssPCM preserves the original code. We formalize this requirement as the pencil selection problem.

\begin{definition}[Pencil Selection Problem]\label{def:problem}
For an incidence matrix $\mathbf{H}$, select kernels $\Pi_1,\ldots,\Pi_B$ such that 
\begin{equation}\nonumber
\label{eq:pencil_selection}
\begin{aligned}
\!\!{\rm (P1)}\;
&\mathcal G(\Pi_i)\cap \mathcal G(\Pi_j)=\varnothing,
\; \forall i\ne j,
\\ 
\!\!{\rm (P2)}\;
&\operatorname{rank}_{\mathbb F_2}(\mathbf H_{\rm p})
=
\operatorname{rank}_{\mathbb F_2}(\mathbf H),
\;
\mathbf H_{\rm p}\triangleq
\begin{bmatrix}
\mathbf H[\mathcal G(\Pi_1),:]\\
\vdots\\
\mathbf H[\mathcal G(\Pi_B),:]
\end{bmatrix}\!.
\end{aligned}
\end{equation}
\end{definition}
Here, (P1) ensures that no incidence row is reused in different PCRBs, and (P2) guarantees that the selected incidence rows generate the same binary row space as $\mathbf{H}$. By adding one AVN per selected pencil, we obtain a $(J+1)B \times (|\mathcal{P}|+B)$ ssPCM with row weight~$\tilde{w}_r$ at most $q^\mu(q-1)+1$ for $\AG(m,q)$ and $q^{\mu+1}+1$ for $\PG(m,q)$. 

Unfortunately, the pencil selection problem is nontrivial, since the number of possible kernels grows exponentially with $m$ and $\mu$. In Theorems~\ref{lem:disjoint_cond} and~\ref{lem:pg_disjoint_cond} below, we reformulate the disjointness condition (P1) as a constant-dimension subspace packing problem on the kernel directions. The explicit construction of pencil sets satisfying (P1), together with the offsets needed to fulfill (P2), will be introduced in Section~\ref{sec:construction}.

\begin{theorem}\label{lem:disjoint_cond}
Let~$\Pi_i = \bm{z}_i + W_i$ and $\Pi_j = \bm{z}_j + W_j$ be two affine \mbox{$\mu$-flats} in $\AG (m,q)$ with $1\leq \mu \leq m-2$, where $W_i,W_j\le\mathbb F_q^m$ and $\dim(W_i)=\dim(W_j)=\mu$. 
Then, $\cG(\Pi_i)\cap \cG(\Pi_j) = \varnothing$,
if and only if one of the following conditions holds:
\begin{subequations}
\renewcommand{\theequation}{\theparentequation-\arabic{equation}}
\begin{align}
  &\dim(W_i \cap W_j) \leq \mu - 2,\label{eq:ag_disjoint_subspace}\\
  \!\!\!\!\!\!\!\text{or}\;&\dim(W_i \cap W_j) = \mu - 1\;\mathrm{and}\;
  \bm{z}_i-\bm{z}_j \notin W_i + W_j.\label{eq:ag_disjoint_offset}
\end{align}
\end{subequations}
\end{theorem}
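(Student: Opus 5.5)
The plan is to reduce the disjointness of two pencils to a dimension count for a single linear subspace of $\Fq^m$. The reduction rests on the description of pencil members from the proof of Lemma~\ref{lem:pencil_prop}. An affine $(\mu+1)$-flat $F=\bm{z}+U$ contains $\Pi_i=\bm{z}_i+W_i$ if and only if $\bm{z}_i\in F$ and $W_i\le U$. In that case $F=\bm{z}_i+U$, since any point of a coset serves as a representative. Hence $\cG(\Pi_i)\cap\cG(\Pi_j)\neq\varnothing$ if and only if some $(\mu+1)$-dimensional $U\le\Fq^m$ satisfies $W_i+W_j\le U$ and $\bm{z}_i+U=\bm{z}_j+U$, that is, $\bm{d}\triangleq\bm{z}_i-\bm{z}_j\in U$.

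The second step removes the quantifier over $U$. I claim such a $U$ exists if and only if
\begin{equation}\nonumber
\dim\big(W_i+W_j+\langle\bm{d}\rangle\big)\le\mu+1 .
\end{equation}
Necessity is immediate, because $U$ contains this sum. For sufficiency, any subspace of dimension at most $\mu+1$ can be extended to one of dimension exactly $\mu+1$, and this is possible because $\mu+1\le m-1<m$ under the hypothesis $\mu\le m-2$.

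The third step is the case analysis, using $\dim(W_i+W_j)=2\mu-\dim(W_i\cap W_j)$.
\begin{itemize}
\item If $\dim(W_i\cap W_j)\le\mu-2$, then $\dim(W_i+W_j)\ge\mu+2$. No valid $U$ exists, so the pencils are disjoint.
\item If $\dim(W_i\cap W_j)=\mu-1$, then $\dim(W_i+W_j)=\mu+1$. The bound holds exactly when $\bm{d}\in W_i+W_j$, so the pencils are disjoint if and only if $\bm{z}_i-\bm{z}_j\notin W_i+W_j$.
\item If $\dim(W_i\cap W_j)=\mu$, then $W_i=W_j$ and $\dim(W_i+W_j+\langle\bm{d}\rangle)\le\mu+1$ always. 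A common flat therefore exists; this includes the degenerate case $\Pi_i=\Pi_j$. Such a pair satisfies neither (\ref{eq:ag_disjoint_subspace}) nor (\ref{eq:ag_disjoint_offset}), which is consistent with the statement.
\end{itemize}
Combining the three cases gives exactly the claimed equivalence.

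No step is deep. The point needing the most care is the existence claim in the second step: the pencil consists of flats of dimension exactly $\mu+1$, so the extension of $W_i+W_j+\langle\bm{d}\rangle$ to dimension exactly $\mu+1$ inside $\Fq^m$ must be justified. This is where the hypothesis $\mu\le m-2$ enters, while $\mu\ge 1$ only ensures that the case $\dim(W_i\cap W_j)=\mu-1$ is meaningful. I will also state explicitly the coset fact that $F$ can be written with representative $\bm{z}_i$, since it is what allows both containments to be expressed through the same $U$.
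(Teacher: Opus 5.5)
Your proof is correct and follows essentially the same route as the paper: you reduce a common pencil member to a $(\mu+1)$-dimensional subspace $U\supseteq W_i+W_j$ containing $\bm{z}_i-\bm{z}_j$, and then split on $\dim(W_i\cap W_j)$. Your reformulation as $\dim(W_i+W_j+\langle \bm{z}_i-\bm{z}_j\rangle)\le\mu+1$ simply puts in one place the extension argument that the paper uses only in the case $W_i=W_j$; as in the paper, that extension actually needs only $\mu\le m-1$, not $\mu\le m-2$.
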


\begin{proof}
Assume that there exists a~$(\mu+1)$-flat~$F = \bm{y} + U$ such that~$F\in \cG(\Pi_i)\cap \cG(\Pi_j)$. Since $F$ contains both $\Pi_i$ and~$\Pi_j$, we have
  $\bm{z}_i-\bm{y},\bm{z}_j-\bm{y}\in U$, which leads to $W_i,W_j \subseteq U$. 
Therefore
\begin{equation}\label{eq:cond1}
  W_i+W_j \subseteq U \quad\mathrm{and}\quad \bm{z}_i-\bm{z}_j \in U.
\end{equation}
Conversely, if there exists a~$(\mu+1)$-dimensional subspace~$U$ such that~\eqref{eq:cond1} holds, then the affine flat $\bm{z}_i+U$ contains $\Pi_i$. It also contains $\Pi_j$, as $\bm{z}_i-\bm{z}_j\in U$ and $W_j\subseteq U$. Hence, the two pencils have a common element if and only if there exists such a subspace $U$.

Since $W_i+W_j\subseteq U$, we have
\begin{equation}\label{eq:cond2}
\dim(W_i+W_j)=2\mu-\dim(W_i\cap W_j)\leq \mu+1.
\end{equation}

If~$\dim(W_i\cap W_j)\leq \mu-2$, then~\mbox{$\dim(W_i+W_j)\geq \mu+2$}, which contradicts~\eqref{eq:cond2}. Therefore, no common~$(\mu+1)$-flat exists, and $\cG(\Pi_i) \cap \cG(\Pi_j) = \varnothing$.

If~$\dim(W_i\cap W_j)=\mu-1$, then~$\dim(W_i+W_j)=\mu+1$. Hence every common~$(\mu+1)$-flat of~$\cG(\Pi_i)$ and~$\cG(\Pi_j)$ must have direction space~$U=W_i + W_j$. Based on~\eqref{eq:cond1}, such a common flat exists if and only if
\begin{equation}
\bm{z}_i - \bm{z}_j \in U = W_i + W_j.
\end{equation}
Equivalently, in this case the two pencils are disjoint if and only if~$\bm{z}_i - \bm{z}_j \notin W_i + W_j$.

Finally, if $\dim(W_i\cap W_j)=\mu$, then $W_i=W_j$. Since $\mu\leq m-2$, there exists a $(\mu+1)$-dimensional subspace $U$ containing both $W_i$ and $\bm{z}_i-\bm{z}_j$, so the two pencils are not disjoint. Combining the three cases, the two pencils are disjoint if and only if one of~\eqref{eq:ag_disjoint_subspace} or~\eqref{eq:ag_disjoint_offset} holds.
\end{proof}

\begin{theorem}\label{lem:pg_disjoint_cond}
Let~$\Pi_i\!=\!\mathbb P(W_i)$ and $\Pi_j\!=\!\mathbb P(W_j)$ be two projective \mbox{$\mu$-flats} in $\PG(m,q)$ with $1\leq \mu \leq m-2$, where
$W_i,W_j\le \mathbb F_q^{m+1}$ and
$\dim(W_i)=\dim(W_j)=\mu+1$. 
Then, $\cG(\Pi_i)\cap \cG(\Pi_j)=\varnothing$, 
if and only if
\begin{equation}\label{eq:pgcond}
\dim(W_i\cap W_j)\le \mu-1.
\end{equation}
\end{theorem}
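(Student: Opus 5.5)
The plan mirrors the proof of Theorem~\ref{lem:disjoint_cond}, but the projective case is simpler because there are no offsets. First we translate a common element of the two pencils into linear algebra. By the proof of Lemma~\ref{lem:pencil_prop}, every projective $(\mu+1)$-flat containing $\Pi_i=\mathbb P(W_i)$ has the form $\mathbb P(U)$ with $W_i< U\le\mathbb F_q^{m+1}$ and $\dim(U)=\mu+2$. Since projectivization preserves inclusion, $\mathbb P(U)\supseteq\mathbb P(W_i)$ is equivalent to $U\supseteq W_i$. Hence $\cG(\Pi_i)\cap\cG(\Pi_j)\neq\varnothing$ if and only if there is a $(\mu+2)$-dimensional subspace $U$ with $W_i+W_j\subseteq U$.

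Next we reduce existence of such a $U$ to a dimension count. A $(\mu+2)$-dimensional $U\supseteq W_i+W_j$ exists exactly when $\dim(W_i+W_j)\le\mu+2$, provided the ambient space is large enough. Necessity is immediate. For sufficiency, extend a basis of $W_i+W_j$ to $\mu+2$ vectors; this is possible because $\mu+2\le m<m+1$ under the hypothesis $\mu\le m-2$. By the dimension formula,
\[
\dim(W_i+W_j)=2(\mu+1)-\dim(W_i\cap W_j),
\]
so the condition $\dim(W_i+W_j)\le\mu+2$ is equivalent to $\dim(W_i\cap W_j)\ge\mu$.

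We then conclude by contraposition. The pencils are disjoint if and only if $\dim(W_i\cap W_j)\le\mu-1$, which is \eqref{eq:pgcond}. For exposition we split into the same three cases as in Theorem~\ref{lem:disjoint_cond}:
\begin{itemize}
\item[(a)] If $\dim(W_i\cap W_j)\le\mu-1$, then $\dim(W_i+W_j)\ge\mu+3$, so no common flat exists.
\item[(b)] If $\dim(W_i\cap W_j)=\mu$, then $U=W_i+W_j$ itself has dimension $\mu+2$, and $\mathbb P(U)$ is a common flat.
\item[(c)] If $W_i=W_j$, then because $\mu+1<m+1$ there is a $(\mu+2)$-dimensional subspace containing $W_i$, so the pencils coincide and are not disjoint.
\end{itemize}
The distinctness of the two kernels is not required.

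The main obstacle is small. The only point needing care is case~(c), and checking that the hypothesis $\mu\le m-2$ guarantees the required extension. In projective space there is no offset condition analogous to \eqref{eq:ag_disjoint_offset}, because any two projective subspaces spanning a space of dimension $\mu+2$ automatically lie in a common $(\mu+1)$-flat.
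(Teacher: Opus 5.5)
Your proof is correct and follows the paper's argument. Both reduce a common element of the two pencils to the existence of a $(\mu+2)$-dimensional $U\supseteq W_i+W_j$, and then apply $\dim(W_i+W_j)=2(\mu+1)-\dim(W_i\cap W_j)$. Your explicit justification of the basis extension (using $\mu+2\le m+1$) and your three-case split only spell out steps the paper leaves implicit.
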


\begin{proof}
A common projective $(\mu+1)$-flat exists if and only if there is
a subspace $U\le \mathbb F_q^{m+1}$ with $\dim(U)\!=\!\mu+2$ such that
\[
W_i+W_j\subseteq U .
\]
Equivalently, such a common flat exists if and only if
\[
\dim(W_i+W_j)\le \mu+2 .
\]
Since $\dim(W_i+W_j)=
2(\mu+1)-\dim(W_i\cap W_j)$, the two pencils have a common element if and only if \mbox{$\dim(W_i\cap W_j)\ge \mu$}. 
Equivalently, the two pencils are disjoint if and only if $\dim(W_i\cap W_j)\le \mu-1$.
\end{proof}

Theorems~\ref{lem:disjoint_cond} and~\ref{lem:pg_disjoint_cond}
reduce the disjoint-pencil constraint (P1) to a condition on subspace distance. Both~\eqref{eq:ag_disjoint_subspace} for $\AG (m,q)$ and~\eqref{eq:pgcond} for $\PG (m,q)$ are equivalent to 
$d_{\rm S}(W_i,W_j)\ge 4$, since the kernel directions are $\mu$-dimensional in $\AG (m,q)$ and $(\mu+1)$-dimensional in $\PG (m,q)$. The affine case admits an additional refinement when $d_{\rm S}(W_i,W_j)=2$, disjoint pencils can still be obtained by choosing the offsets such that $\bm{z}_i-\bm{z}_j\notin W_i+W_j$, as stated in~\eqref{eq:ag_disjoint_offset}. 

A natural approach to enforce $d_\mathrm{S}\geq 4$ is to draw the kernel directions from a constant-dimension subspace code. Such a code can be obtained by lifting a rank-metric code, which has been widely employed in network coding~\cite{Silva2008}. Among rank-metric codes, Gabidulin codes are maximum rank distance (MRD) codes and allow for an explicit algebraic construction.
\section{Pencil Selection Based on Gabidulin Codes}\label{sec:construction}
\subsection{Lifted Gabidulin Codes}
We briefly recall the lifted Gabidulin construction used to generate constant-dimension subspace codes. Let $1\leq n \leq M$. For each matrix $\mathbf{A}\in\mathbb{F}_q^{n\times M}$, define the lifted $n$-dimensional subspace of $\mathbb{F}_q^{n+M}$ by
\begin{equation}\label{eq:lift}
  \mathcal{L}(\mathbf{A})\triangleq \mathrm{rowspan}[\mathbf{I}_n\; \mathbf{A}] \subseteq \Fq^{n+M}.
\end{equation}
For any $\mathbf{A}_i,\mathbf{A}_j\in\mathbb{F}_q^{n\times M}$, we have
\begin{equation}\label{eq:subspace_distance}
  d_{\rm S}(\mathcal{L}(\mathbf{A}_i),\mathcal{L}(\mathbf{A}_j))=2\rank_{\Fq}(\mathbf{A}_i-\mathbf{A}_j).
\end{equation}
  \vspace{-2.5pt}
Hence, a rank-metric code with minimum rank distance $\delta$ is lifted to a constant-dimension subspace code with minimum subspace distance $2\delta$. Gabidulin codes are MRD codes and meet the Singleton bound with equality~\cite{Gabidulin1985}. 

Let $\mathcal{C}_\mathrm{Gab}$ be an $(n,k,\delta)$ Gabidulin code over $\mathbb{F}_{q^M}$, where $\delta \!=\! n \!-\! k \!+\! 1$. For each vector $\bm{u}\!=\!(u_0,u_1,\ldots,u_{k-1})\in\mathbb{F}^k_{q^M}$, define the $q$-linearized polynomial $f_{\bm{u}}(x)=\sum_{i=0}^{k-1} u_i x^{q^{i}}$. Choose $g_1, g_2, \ldots, g_n \!\in\! \mathbb{F}_{q^M}$ linearly independent over $\Fq$. 
 The corresponding codeword is given by
  \vspace{-2.5pt}
 \begin{equation}
 \bm{c}=(f_{\bm{u}}(g_1), f_{\bm{u}}(g_2), \ldots, f_{\bm{u}}(g_n))\in\mathbb{F}^n_{q^M}.
 \end{equation}
 By expanding each symbol of $\bm{c}$ with respect to a fixed $\mathbb{F}_q$-basis of $\mathbb{F}_{q^M}$, we obtain a matrix $\mathbf{A}\!\in\!\mathbb{F}_q^{n\times M}$. The cardinality of the resulting code is 
 \vspace{-0.5mm}
\begin{equation}
  |\mathcal{C}_\mathrm{Gab}| = q^{kM}.
\end{equation}
\vspace{-0.5mm}
Except in the affine $\mu=1$ boundary case, we fix $\delta=2$, so that the lifted Gabidulin code satisfies $d_{\rm S}(\mathcal{L}(A_i),\mathcal{L}(A_j))\ge 4$ for all distinct $\mathbf{A}_i,\mathbf{A}_j\in\mathcal{C}_\mathrm{Gab}$, which provides a sufficient condition for pencil disjointness in Theorems~\ref{lem:disjoint_cond} and~\ref{lem:pg_disjoint_cond}.

\vspace{-1mm}
\subsection{Pencil Selection for Affine Geometry Codes}
In this section, we give an explicit construction of pencils for $\AG(m,q)$ based on lifted Gabidulin codes. Identify $\Fqm \cong \Fq^\mu \times \Fq^{m-\mu}$. For a matrix~$\mathbf{A}\in\mathbb{F}_q^{\mu\times(m-\mu)}$, define the $\mu$-dimensional subspace
\begin{equation}\label{eq:WA_intro}
W_{\mathbf{A}} \triangleq \{(\bm{x}, \bm{x}\mathbf{A}) : \bm{x} \in \Fq^\mu\} \leq \Fqm.
\end{equation}
Given an offset vector $b(\mathbf{A})\in\Fq^{m-\mu}$, the kernel is defined as
\begin{equation}\label{eq:kernel_intro}
\Pi_A \triangleq W_\mathbf{A}+(\bm{0},b(\mathbf{A}))=\{(\bm{x},b(\mathbf{A})+\bm{x}\mathbf{A}):\bm{x}\in\Fq^\mu\}.
\end{equation}
Then, $\Pi_{\mathbf{A}}$ is a $\mu$-flat with direction $W_{\mathbf{A}}$ in $\AG(m,q)$, and the corresponding PCRB is extracted from the full pencil~$\cG(\Pi_{\mathbf{A}})$.

\subsubsection{The Case~$\mu\geq 2$} 
Let $n=\min(\mu,m-\mu)$ and $M=\max(\mu,m-\mu)$. Construct an $(n, n-1, 2)$ Gabidulin code over $\mathbb{F}_{q^M}$. If $\mu\leq m-\mu$, this code is directly used as a matrix set $\mathcal{A}\subseteq\Fq^{\mu\times(m-\mu)}$. If $\mu>m-\mu$, we transpose the resulting matrices to obtain $\mathcal{A}\subseteq\Fq^{\mu\times(m-\mu)}$. In both cases,
\begin{equation}
  \rank_{\Fq}(\mathbf{A}_i-\mathbf{A}_j)\ge 2, \quad \forall \mathbf{A}_i\ne \mathbf{A}_j\in\mathcal{A}.
\end{equation}
Consequently, by the lifting identity in~\eqref{eq:subspace_distance}, 
\begin{equation}
  d_{\rm S}(W_{\mathbf{A}_i},W_{\mathbf{A}_j})=d_{\rm S}(\mathcal{L}(\mathbf{A}_i),\mathcal{L}(\mathbf{A}_j))\ge 4,
\end{equation}
which is equivalent to $\dim(W_{\mathbf{A}_i}\cap W_{\mathbf{A}_j})\!\leq\! \mu\!-\!2$. By Theorem~\ref{lem:disjoint_cond}, the resulting pencils $\{\cG(\Pi_\mathbf{A})\}_{\mathbf{A}\in\mathcal{A}}$ are pairwise disjoint for any offset map $b(\mathbf{A})$. Hence, the lifted Gabidulin construction satisfies (P1) and offers $B=q^{(n-1)M}$ affine kernels for $\mu\geq 2$. The default offset choice is $b(\mathbf{A})=\bm{0}$, while non-zero offsets may be chosen to meet the rank-preservation condition (P2).

\subsubsection{The Case $\mu\!=\!1$}
\label{sec:mu1}
Setting $\mu=1$ leads to a trivial $(1,1,1)$ Gabidulin code over $\mathbb{F}_{q^{m-1}}$, which is insufficient for the lifting identity~\eqref{eq:subspace_distance} to achieve $d_{\rm{S}}\geq 4$. We therefore enforce (P1) by condition~\eqref{eq:ag_disjoint_offset} of Theorem~\ref{lem:disjoint_cond}. 
Identify $\mathbb{F}_q^m \cong \Fq\times\Fq^{m-1}$, fix an $\Fq$-linear isomorphism $\varphi:\Fq^{m-1}\rightarrow\mathbb{F}_{q^{m-1}}$, and choose $\lambda\in\mathbb{F}_{q^{m-1}}\setminus\Fq$. For brevity, we write $\bm{a}\lambda$ for $\varphi^{-1}(\varphi(\bm{a})\lambda)\in\mathbb{F}_q^{m-1}$. 
For each $\bm{a}\in\Fq^{m-1}$, define the line
\begin{equation}\label{eq:mu1_line}
\Pi_{\bm{a}}=\{\left(x,\;\bm{a}+x\bm{a}\lambda):x\in\Fq\right\},
\end{equation}
whose direction and offset are
\begin{equation}
W_{\bm{a}}\!=\!\{(x,x\bm{a}\lambda):x\in\Fq\}\;\;\mathrm{and}\;\;b(\bm{a})\!=\!\bm{a},
\end{equation}
respectively. For any $\bm{a}_i\!\neq\! \bm{a}_j$, the vectors $(1,\bm{a}_i\lambda)$ and $(1,\bm{a}_j\lambda)$ are linearly independent over $\Fq$, hence $\dim(W_{\bm{a}_i}\cap W_{\bm{a}_j})\!=\!0\!=\!\mu\!-\!1$. Suppose for contradiction that $(0,\bm{a}_i-\bm{a}_j)\in W_{\bm{a}_i}+W_{\bm{a}_j}$. Then, there exist $\alpha,\beta\in\mathbb{F}_q$ such that $ (0,\bm{a}_i-\bm{a}_j)=\alpha(1,\bm{a}_i\lambda)+\beta(1,\bm{a}_j\lambda)$,
which leads to $\beta\!=\!-\alpha$ and $\bm{a}_i\!-\!\bm{a}_j\!=\!\alpha(\bm{a}_i\!-\!\bm{a}_j)\lambda$. Since $\bm{a}_i\neq \bm{a}_j$ forces $\alpha\neq0$, this implies $\lambda\!=\!\alpha^{-1}\in\Fq$, contradicting $\lambda\in\mathbb{F}_{q^{m-1}}\setminus\mathbb{F}_q$. Therefore, by Theorem~\ref{lem:disjoint_cond}, the pencils $\{\cG(\Pi_{\bm{a}}):\bm{a}\in\mathbb{F}_q^{m-1}\}$ are pairwise disjoint, which provides $B=q^{m-1}$ affine kernels for $\mu\!=\!1$. 

\subsubsection{Rank Verification for Lengths up to $1024$}
\label{sec:rank_offset}
\begin{table}[t]
\centering
\caption{Lifted Gabidulin construction and matrix sparsification for affine geometry codes with lengths up to $1024$}
\label{tab:ag_catalog}
\scriptsize
\setlength{\tabcolsep}{2.2pt}
\renewcommand{\arraystretch}{0.87}
\vspace{-3mm}
\begin{tabular}{@{}cccccccccccc@{}}
\toprule 
$m$ & $s$ & $\mu$ & $N$ & $K$ & $\mathcal C_{\mathrm {Gab}}$ & $b$ & $J$ & $B$ & $\rho$ & $w_{\mathrm r}$ & $\tilde{w}_r$ \\
\midrule
$3$ & $1$ & $1$ & $8$ & $4$ & $(1,1,1)_{2^2}$ & $\bm{a}$  & $3$ & $4$ & $85.7\%$ & $4$ & $3$ \\
$4$ & $1$ & $2$ & $16$ & $11$ & $(2,1,2)_{2^2}$ & $0$ & $3$ & $4$ & $40\%$ & $8$ & $5$ \\
$4$ & $1$ & $1$ & $16$ & $5$ & $(1,1,1)_{2^3}$ & $\bm{a}$  & $7$ & $8$ & $40\%$ & $4$ & $3$ \\
$5$ & $1$ & $3$ & $32$ & $26$ & $(2,1,2)_{2^3}$ & $0$ & $3$ & $8$ & $38.7\%$ & $16$ & $9$ \\
$5$ & $1$ & $2$ & $32$ & $16$ & $(2,1,2)_{2^3}$ & $0$ & $7$ & $8$ & $9.03\%$ & $8$ & $5$ \\
$5$ & $1$ & $1$ & $32$ & $6$ & $(1,1,1)_{2^4}$ &  $\bm{a}$ & $15$ & $16$ & $19.4\%$ & $4$ & $3$ \\
$6$ & $1$ & $4$ & $64$ & $57$ & $(2,1,2)_{2^4}$ & $0$ & $3$ & $16$ & $38.1\%$ & $32$ & $17$ \\
$6$ & $1$ & $3$ & $64$ & $42$ & $(3,2,2)_{2^3}$ & $0$ & $7$ & $64$ & $17.2\%$ & $16$ & $9$ \\
$6$ & $1$ & $2$ & $64$ & $22$ & $(2,1,2)_{2^4}$ & $0$ & $15$ & $16$ & $2.15\%$ & $8$ & $5$ \\
$6$ & $1$ & $1$ & $64$ & $7$ & $(1,1,1)_{2^5}$ & $\bm{a}$  & $31$ & $32$ & $9.52\%$ & $4$ & $3$ \\
$3$ & $2$ & $1$ & $64$ & $48$ & $(1,1,1)_{4^2}$ & $\bm{a}$  & $5$ & $16$ & $95.2\%$ & $16$ & $\{13,5\}$ \\
$7$ & $1$ & $5$ & $128$ & $120$ & $(2,1,2)_{2^5}$ & $0$ & $3$ & $32$ & $37.8\%$ & $64$ & $33$ \\
$7$ & $1$ & $4$ & $128$ & $99$ & $(3,2,2)_{2^4}$ & $0$ & $7$ & $256$ & $16.8\%$ & $32$ & $17$ \\
$7$ & $1$ & $3$ & $128$ & $64$ & $(3,2,2)_{2^4}$ & $0$ & $15$ & $256$ & $4.06\%$ & $16$ & $9$ \\
$7$ & $1$ & $2$ & $128$ & $29$ & $(2,1,2)_{2^5}$ & $0$ & $31$ & $32$ & $0.52\%$ & $8$ & $5$ \\
$7$ & $1$ & $1$ & $128$ & $8$ & $(1,1,1)_{2^6}$ & $\bm{a}$  & $63$ & $64$ & $4.72\%$ & $4$ & $3$ \\
$8$ & $1$ & $6$ & $256$ & $247$ & $(2,1,2)_{2^6}$ & $0$ & $3$ & $64$ & $37.6\%$ & $128$ & $65$ \\
$8$ & $1$ & $5$ & $256$ & $219$ & $(3,2,2)_{2^5}$ & $0$ & $7$ & $1024$ & $16.6\%$ & $64$ & $33$ \\
$8$ & $1$ & $4$ & $256$ & $163$ & $(4,3,2)_{2^4}$ & $0$ & $15$ & $4096$ & $7.9\%$ & $32$ & $17$ \\
$8$ & $1$ & $3$ & $256$ & $93$ & $(3,2,2)_{2^5}$ & $0$ & $31$ & $1024$ & $0.99\%$ & $16$ & $9$ \\
$8$ & $1$ & $2$ & $256$ & $37$ & $(2,1,2)_{2^6}$ & $0$ & $63$ & $64$ & $0.13\%$ & $8$ & $5$ \\
$8$ & $1$ & $1$ & $256$ & $9$ & $(1,1,1)_{2^7}$ & $\bm{a}$  & $127$ & $128$ & $2.35\%$ & $4$ & $3$ \\
$4$ & $2$ & $2$ & $256$ & $231$ & $(2,1,2)_{4^2}$ & $b_2$ & $5$ & $16$ & $23.5\%$ & $64$ & $\{49,17\}$ \\
$4$ & $2$ & $1$ & $256$ & $127$ & $(1,1,1)_{4^3}$ & $\bm{a}$  & $21$ & $64$ & $23.5\%$ & $16$ & $\{13,5\}$ \\
$9$ & $1$ & $7$ & $512$ & $502$ & $(2,1,2)_{2^7}$ & $0$ & $3$ & $128$ & $37.6\%$ & $256$ & $129$ \\
$9$ & $1$ & $6$ & $512$ & $466$ & $(3,2,2)_{2^6}$ & $0$ & $7$ & $4096$ & $16.5\%$ & $128$ & $65$ \\
$9$ & $1$ & $5$ & $512$ & $382$ & $(4,3,2)_{2^5}$ & $0$ & $15$ & $32768$ & $7.8\%$ & $64$ & $33$ \\
$9$ & $1$ & $4$ & $512$ & $256$ & $(4,3,2)_{2^5}$ & $0$ & $31$ & $32768$ & $1.92\%$ & $32$ & $17$ \\
$9$ & $1$ & $3$ & $512$ & $130$ & $(3,2,2)_{2^6}$ & $0$ & $63$ & $4096$ & $0.24\%$ & $16$ & $9$ \\
$9$ & $1$ & $2$ & $512$ & $46$ & $(2,1,2)_{2^7}$ & $0$ & $127$ & $128$ & $0.03\%$ & $8$ & $5$ \\
$9$ & $1$ & $1$ & $512$ & $10$ & $(1,1,1)_{2^8}$ & $\bm{a}$  & $255$ & $256$ & $1.17\%$ & $4$ & $3$ \\
$3$ & $3$ & $1$ & $512$ & $448$ & $(1,1,1)_{8^2}$ & $\bm{a}$  & $9$ & $64$ & $98.6\%$ & $64$ & $\{57,9\}$ \\
$10$ & $1$ & $8$ & $1024$ & $1013$ & $(2,1,2)_{2^8}$ & $0$ & $3$ & $256$ & $37.5\%$ & $512$ & $257$ \\
$10$ & $1$ & $7$ & $1024$ & $968$ & $(3,2,2)_{2^7}$ & $0$ & $7$ & $16384$ & $16.5\%$ & $256$ & $129$ \\
$10$ & $1$ & $6$ & $1024$ & $848$ & $(4,3,2)_{2^6}$ & $0$ & $15$ & $262144$ & $7.74\%$ & $128$ & $65$ \\
$10$ & $1$ & $5$ & $1024$ & $638$ & $(5,4,2)_{2^5}$ & $0$ & $31$ & $1048576$ & $3.78\%$ & $64$ & $33$ \\
$10$ & $1$ & $4$ & $1024$ & $386$ & $(4,3,2)_{2^6}$ & $0$ & $63$ & $262144$ & $0.47\%$ & $32$ & $17$ \\
$10$ & $1$ & $3$ & $1024$ & $176$ & $(3,2,2)_{2^7}$ & $0$ & $127$ & $16384$ & $0.06\%$ & $16$ & $9$ \\
$10$ & $1$ & $2$ & $1024$ & $56$ & $(2,1,2)_{2^8}$ & $0$ & $255$ & $256$ & $0.008\%$ & $8$ & $5$ \\
$10$ & $1$ & $1$ & $1024$ & $11$ & $(1,1,1)_{2^9}$ & $\bm{a}$  & $511$ & $512$ & $0.59\%$ & $4$ & $3$ \\
$5$ & $2$ & $3$ & $1024$ & $988$ & $(2,1,2)_{4^3}$ & $b_1$ & $5$ & $64$ & $23.5\%$ & $256$ & $\{193,65\}$ \\
$5$ & $2$ & $2$ & $1024$ & $748$ & $(2,1,2)_{4^3}$ & $b_2$ & $21$ & $64$ & $1.45\%$ & $64$ & $\{49,17\}$ \\
$5$ & $2$ & $1$ & $1024$ & $288$ & $(1,1,1)_{4^4}$ & $\bm{a}$  & $85$ & $256$ & $5.87\%$ & $16$ & $\{13,5\}$ \\
\bottomrule
\end{tabular}
\vspace{0.35em}

\begin{minipage}{0.48\textwidth}
\scriptsize
$N$ and $K$ denote the blocklength and dimension of the FG code, $\rho$ is the fraction of edges in $\mathbf{H}$ retained by $\mathbf{H}_p$, and the row weights after sparsification are listed as~$\tilde{w}_r$. 

For offsets marked $b_1$ and $b_2$, let $\eta(\mathbf{A})\in\mathbb{F}_q^{m-\mu}$ be the first $m\!-\!\mu$ $\Fq$-coordinates of $u_0$ in a fixed $\Fq$-basis of $\mathbb{F}_{q^M}$, then $b_1(\mathbf{A})=\eta(\mathbf{A})$ and $b_{2,j}(\mathbf{A})=\eta_j(\mathbf{A})+\eta_{j+1}(\mathbf{A})\eta_{j+2}(\mathbf{A})$, $j=0,1,\ldots,m\!-\!\mu\!-\!1$ ($\mathrm{mod}\,(m\!-\!\mu)$).
\end{minipage}
\vspace{-2mm} 
\end{table} 

\begin{figure*}[t]
\centering
\input{figures/Fig1_128_70.tex}
\vspace{-3mm} 
\caption{BLER performance of flooding BP decoding on 5G LDPC matrices and ssPCMs of FG codes over binary additive white Gaussian noise channels.}
\vspace{-5mm}
\label{fig:fer}
\end{figure*}
The lifted Gabidulin construction ensures (P1), but not necessarily (P2). For the affine geometry, the offset map affects the binary row space of the selected pencils and must be chosen appropriately. Here, we verify $\rank_{\Ft}(\mathbf{H}_{p})=\rank_{\Ft}(\mathbf{H}_{\mathrm{AG}})$~(P2) directly for all affine geometry codes of lengths up to $1024$. 

When $q=2$, the zero offset $b(\mathbf{A})=\bm{0}$ satisfies the rank condition (P2) for all corresponding Reed-Muller codes. For $\mu=1$, the construction in~\eqref{eq:mu1_line} also satisfies (P2) for all tested cases. In contrast, for $q>2$, the zero-offset construction fails~(P2), as shown by Theorem~\ref{thm:zero_offset_rank_def}. We therefore search over a small family of offset maps $b:\mathcal{A}\rightarrow \mathbb{F}_q^{m-\mu}$, including linear and low-degree non-linear maps. Our empirical search shows that nonlinear offsets are often necessary for achieving the desired rank properties. The verified parameter sets are summarized in Table~\ref{tab:ag_catalog}.
\begin{theorem}\label{thm:zero_offset_rank_def}
Assume $q>2$ and $2\leq\mu\leq m-2$. Let $\mathbf{H}_{p}$ be the parity-check matrix obtained from $B$ pencils that are selected by the lifted Gabidulin construction with zero offset $b(\mathbf{A})=\bm{0}$. Then $\rank_{\Ft}(\mathbf{H}_{p})<\rank_{\Ft}(\mathbf{H}_\mathrm{AG})$.
\end{theorem}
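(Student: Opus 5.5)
The plan is to exhibit a binary vector that satisfies every check in $\mathbf{H}_{p}$ but violates some row of $\mathbf{H}_{\mathrm{AG}}$. This shows $\mathrm{row}_{\Ft}(\mathbf{H}_p)\subsetneq \mathrm{row}_{\Ft}(\mathbf{H}_{\mathrm{AG}})$. Since the rows of $\mathbf{H}_p$ are a subset of the rows of $\mathbf{H}_{\mathrm{AG}}$, this strict inclusion is equivalent to the claimed rank deficiency. Equivalently, I show that the null space of $\mathbf{H}_p$ strictly contains the FG code.

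\emph{Step 1 (structure of the selected rows).} With $b(\mathbf{A})=\bm{0}$, each kernel is $\Pi_{\mathbf{A}}=W_{\mathbf{A}}$, which is a linear subspace and so contains the origin. By the proof of Lemma~\ref{lem:pencil_prop}, every flat in $\cG(\Pi_{\mathbf{A}})$ has the form $\bm{0}+U$ with $W_{\mathbf{A}}<U\le\Fqm$ and $\dim(U)=\mu+1$. Hence every row of $\mathbf{H}_p$ is the indicator of a \emph{linear} $(\mu+1)$-dimensional subspace $U$ of $\Fqm$. This holds regardless of which Gabidulin matrices are chosen. The restriction $\mu\ge 2$ only ensures that we are in the construction of the $\mu\geq 2$ case, where the default offset is zero.

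\emph{Step 2 (the test vector, using $q>2$).} Since $q>2$, choose $\alpha\in\Fq\setminus\{0,1\}$, and fix a nonzero point $\bm{x}\in\Fqm$. Let $\bm{c}\in\Ft^{q^m}$ be the indicator of the two distinct points $\{\bm{x},\alpha\bm{x}\}$. Any linear subspace $U$ contains $\bm{x}$ if and only if it contains $\alpha\bm{x}$. Therefore $|U\cap\{\bm{x},\alpha\bm{x}\}|\in\{0,2\}$, and $\mathbf{H}_p\bm{c}^{\mathsf T}=\bm{0}$ over $\Ft$.

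\emph{Step 3 (a violated affine check).} Because $\mu+1\le m-1<m$, there is a $(\mu+1)$-dimensional subspace $U$ with $\bm{x}\notin U$. Consider the affine flat $F=\bm{x}+U$, which contains $\bm{x}$. It contains $\alpha\bm{x}$ if and only if $(\alpha-1)\bm{x}\in U$. Since $\alpha-1\neq0$, this would force $\bm{x}\in U$, which is false. So $|F\cap\{\bm{x},\alpha\bm{x}\}|=1$, and the row of $\mathbf{H}_{\mathrm{AG}}$ indexed by $F$ has odd inner product with $\bm{c}$. Thus $\bm{c}$ lies in the null space of $\mathbf{H}_p$ but not in that of $\mathbf{H}_{\mathrm{AG}}$. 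Some row of $\mathbf{H}_{\mathrm{AG}}$ is therefore outside $\mathrm{row}_{\Ft}(\mathbf{H}_p)$, and the rank inequality is strict.

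I expect no serious obstacle. The only point needing care is Step 1: confirming that every member of a pencil centered at a linear kernel is itself linear, which follows directly from the coset-representative argument in Lemma~\ref{lem:pencil_prop}. It is also worth noting why the argument fails for $q=2$: there $\alpha$ cannot exist, and this matches the paper's observation that zero offsets work for Reed--Muller codes.
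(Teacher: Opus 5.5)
Your proof is correct and is essentially the paper's argument, stated on the null-space side rather than the row-space side. The paper shows that every vector in the row space of $\mathbf{H}_p$ satisfies $v(\bm{x})=v(\gamma\bm{x})$, because each selected flat is a linear subspace, and it uses the flat $\bm{z}+U$ with $\bm{z}\notin U$ as a row of $\mathbf{H}_{\mathrm{AG}}$ that breaks this invariance. Your test vector, the indicator of $\{\bm{x},\alpha\bm{x}\}$, is exactly the parity check certifying that invariance, and your violated flat $\bm{x}+U$ is the same witness.
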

\begin{proof}
Since $b(\mathbf{A})=\bm{0}$, every selected kernel contains the origin. Hence, every selected $(\mu+1)$-flat is a linear subspace of $\Fq^m$. Identifying each row~$v_F$ with indicator function $1_F:\mathbb{F}_q^m\rightarrow \mathbb{F}_2$, we have 
\begin{equation}\label{eq:scalar_invariance}
v_F(\bm{x})=v_F(\gamma\bm{x})
\end{equation}
for every selected row, every $\bm{x}\in\Fq^m$, and every $\gamma\in\Fq^\ast$. This property is preserved under binary linear combinations, so every vector in $\mathrm{row}_{\Ft}(\mathbf{H}_{p})$ satisfies~\eqref{eq:scalar_invariance}. 

We now exhibit a row~$v_{F^{\prime}}$ of $\mathbf{H}_{\mathrm{AG}}$ whose indicator function does not satisfy~\eqref{eq:scalar_invariance}. 
Given $2\!\leq\!\mu\!\le\! m-2$, choose a $(\mu+1)$-dimensional subspace $U\le\Fq^m$ and a point
$\bm{z}\notin U$. The affine flat $F'=\bm{z}+U$ is a valid row of $\mathbf{H}_\mathrm{AG}$. Choose any $\gamma\in\mathbb{F}_q^\ast\setminus\{1\}$, which is non-empty since $q>2$. Then $(\gamma-1)\bm{z}\notin U$, so $\gamma\bm{z}\notin \bm{z}+U=F^\prime$. Hence, $v_{F'}$ does not satisfy~\eqref{eq:scalar_invariance}.  
Since 
$\mathrm{row}_{\Ft}(\mathbf{H}_{p})\!\subseteq\!\mathrm{row}_{\Ft}(\mathbf{H}_\mathrm{AG})$, $v_{F'}\notin \mathrm{row}_{\Ft}(\mathbf{H}_p)$, and $v_{F'}\in \mathrm{row}_{\Ft}(\mathbf{H}_\mathrm{AG})$, we have $\rank_{\Ft}(\mathbf{H}_{p})<\rank_{\Ft}(\mathbf{H}_\mathrm{AG})$.
\end{proof}

\subsection{Pencil Selection for Projective Geometry Codes}\label{sec:pg_construction} 
\subsubsection{Gabidulin Construction}
Unlike in the affine case, no offset is involved, since the projective flats are represented by linear subspaces. Identify $\mathbb{F}_q^{m+1}\cong \Fq^{\mu+1}\times \Fq^{m-\mu}$. For a matrix $\mathbf{A}\in\mathbb{F}_q^{(\mu+1)\times(m-\mu)}$, define the $(\mu+1)$-dimensional subspace
\begin{equation}
  W_\mathbf{A}\triangleq\{(\bm{x},\bm{x}\mathbf{A}):\bm{x}\in\Fq^{\mu+1}\}\leq\mathbb{F}_q^{m+1}.
\end{equation}
The corresponding projective kernel is 
\begin{equation}
  \Pi_\mathbf{A}\triangleq\mathbb{P}(W_\mathbf{A}),
\end{equation}
which is a projective $\mu$-flat in $\PG(m,q)$. 

Let $n=\min(\mu+1,m-\mu)$ and $M=\max(\mu+1,m-\mu)$. Construct an $(n, n\!-\!1, 2)$ Gabidulin code over $\mathbb{F}_{q^M}$. After expanding its codewords over $\Fq$ and transposing if necessary, we obtain a matrix set $\mathcal{A}\subseteq \mathbb{F}_q^{(\mu+1)\times(m-\mu)}$. The lifting identity~\eqref{eq:subspace_distance} gives $d_{\rm S}(W_{\mathbf{A}_i},W_{\mathbf{A}_j})\geq 4$ for $\mathbf{A}_i,\mathbf{A}_j\in\mathcal{A}$, equivalently, $\dim(W_{\mathbf{A}_i}\cap W_{\mathbf{A}_j})\leq\mu\!-\!1$. Therefore, the full pencils $\{\cG(\Pi_\mathbf{A}):\mathbf{A}\in\mathcal{A}\}$ are pairwise disjoint, providing $B \!=\! q^{(n-1)M}$ projective kernels. 
\subsubsection{Rank Verification for Lengths up to $1024$} 
We perform the pencil selection for all projective geometry codes of lengths up to $1024$. As shown in Table~\ref{tab:pg_catalog}, the results for all tested cases achieve $\rank_{\Ft}(\mathbf{H}_p)=\rank_{\Ft}(\mathbf{H}_\mathrm{PG})$~(P2). 
\begin{table}[t] 
\centering
\caption{Lifted Gabidulin construction and matrix sparsification for projective geometry codes with lengths up to $1024$.}
\label{tab:pg_catalog}
\scriptsize
\setlength{\tabcolsep}{2.8pt}
\renewcommand{\arraystretch}{0.87}
\vspace{-2mm}
\begin{tabular}{@{}ccccccccccc@{}}
\toprule
$m$ & $s$ & $\mu$ & $N$ & $K$ & $\mathcal C_{\mathrm {Gab}}$ & $J$ & $B$ & $\rho$ & $w_{\mathrm r}$ & $\tilde{w}_r$ \\
\midrule
$3$ & $1$ & $1$ & $15$ & $10$ & $(2,1,2)_{2^2}$ & $3$ & $4$ & $80\%$ & $7$ & $\{5,4\}$ \\
$4$ & $1$ & $2$ & $31$ & $25$ & $(2,1,2)_{2^3}$ & $3$ & $8$ & $77.4\%$ & $15$ & $\{9,8\}$ \\
$4$ & $1$ & $1$ & $31$ & $15$ & $(2,1,2)_{2^3}$ & $7$ & $8$ & $36.1\%$ & $7$ & $\{5,4\}$ \\
$5$ & $1$ & $3$ & $63$ & $56$ & $(2,1,2)_{2^4}$ & $3$ & $16$ & $76.2\%$ & $31$ & $\{17,16\}$ \\
$5$ & $1$ & $2$ & $63$ & $41$ & $(3,2,2)_{2^3}$ & $7$ & $64$ & $68.8\%$ & $15$ & $\{9,8\}$ \\
$5$ & $1$ & $1$ & $63$ & $21$ & $(2,1,2)_{2^4}$ & $15$ & $16$ & $17.2\%$ & $7$ & $\{5,4\}$ \\
$3$ & $2$ & $1$ & $85$ & $68$ & $(2,1,2)_{4^2}$ & $5$ & $16$ & $94.1\%$ & $21$ & $\{17,6\}$ \\
$6$ & $1$ & $4$ & $127$ & $119$ & $(2,1,2)_{2^5}$ & $3$ & $32$ & $75.6\%$ & $63$ & $\{33,32\}$ \\
$6$ & $1$ & $3$ & $127$ & $98$ & $(3,2,2)_{2^4}$ & $7$ & $256$ & $67.2\%$ & $31$ & $\{17,16\}$ \\
$6$ & $1$ & $2$ & $127$ & $63$ & $(3,2,2)_{2^4}$ & $15$ & $256$ & $32.5\%$ & $15$ & $\{9,8\}$ \\
$6$ & $1$ & $1$ & $127$ & $28$ & $(2,1,2)_{2^5}$ & $31$ & $32$ & $8.4\%$ & $7$ & $\{5,4\}$ \\
$7$ & $1$ & $5$ & $255$ & $246$ & $(2,1,2)_{2^6}$ & $3$ & $64$ & $75.3\%$ & $127$ & $\{65,64\}$ \\
$7$ & $1$ & $4$ & $255$ & $218$ & $(3,2,2)_{2^5}$ & $7$ & $1024$ & $66.4\%$ & $63$ & $\{33,32\}$ \\
$7$ & $1$ & $3$ & $255$ & $162$ & $(4,3,2)_{2^4}$ & $15$ & $4096$ & $63.2\%$ & $31$ & $\{17,16\}$ \\
$7$ & $1$ & $2$ & $255$ & $92$ & $(3,2,2)_{2^5}$ & $31$ & $1024$ & $15.8\%$ & $15$ & $\{9,8\}$ \\
$7$ & $1$ & $1$ & $255$ & $36$ & $(2,1,2)_{2^6}$ & $63$ & $64$ & $4.15\%$ & $7$ & $\{5,4\}$ \\
$4$ & $2$ & $2$ & $341$ & $315$ & $(2,1,2)_{4^3}$ & $5$ & $64$ & $93.8\%$ & $85$ & $\{65,22\}$ \\
$4$ & $2$ & $1$ & $341$ & $195$ & $(2,1,2)_{4^3}$ & $21$ & $64$ & $23.2\%$ & $21$ & $\{17,6\}$ \\
$8$ & $1$ & $6$ & $511$ & $501$ & $(2,1,2)_{2^7}$ & $3$ & $128$ & $75.1\%$ & $255$ & $\{129,128\}$ \\
$8$ & $1$ & $5$ & $511$ & $465$ & $(3,2,2)_{2^6}$ & $7$ & $4096$ & $66\%$ & $127$ & $\{65,64\}$ \\
$8$ & $1$ & $4$ & $511$ & $381$ & $(4,3,2)_{2^5}$ & $15$ & $32768$ & $62.4\%$ & $63$ & $\{33,32\}$ \\
$8$ & $1$ & $3$ & $511$ & $255$ & $(4,3,2)_{2^5}$ & $31$ & $32768$ & $30.7\%$ & $31$ & $\{17,16\}$ \\
$8$ & $1$ & $2$ & $511$ & $129$ & $(3,2,2)_{2^6}$ & $63$ & $4096$ & $7.8\%$ & $15$ & $\{9,8\}$ \\
$8$ & $1$ & $1$ & $511$ & $45$ & $(2,1,2)_{2^7}$ & $127$ & $128$ & $2.1\%$ & $7$ & $\{5,4\}$ \\
$3$ & $3$ & $1$ & $585$ & $520$ & $(2,1,2)_{8^2}$ & $9$ & $64$ & $98.5\%$ & $73$ & $\{65,10\}$ \\
$9$ & $1$ & $7$ & $1023$ & $1012$ & $(2,1,2)_{2^8}$ & $3$ & $256$ & $75.1\%$ & $511$ & $\{257,256\}$ \\
$9$ & $1$ & $6$ & $1023$ & $967$ & $(3,2,2)_{2^7}$ & $7$ & $16384$ & $65.8\%$ & $255$ & $\{129,128\}$ \\
$9$ & $1$ & $5$ & $1023$ & $847$ & $(4,3,2)_{2^6}$ & $15$ & $262144$ & $61.9\%$ & $127$ & $\{65,64\}$ \\
$9$ & $1$ & $4$ & $1023$ & $637$ & $(5,4,2)_{2^5}$ & $31$ & $1048576$ & $60.5\%$ & $63$ & $\{33,32\}$ \\
$9$ & $1$ & $3$ & $1023$ & $385$ & $(4,3,2)_{2^6}$ & $63$ & $262144$ & $15.1\%$ & $31$ & $\{17,16\}$ \\
$9$ & $1$ & $2$ & $1023$ & $175$ & $(3,2,2)_{2^7}$ & $127$ & $16384$ & $3.87\%$ & $15$ & $\{9,8\}$ \\
$9$ & $1$ & $1$ & $1023$ & $55$ & $(2,1,2)_{2^8}$ & $255$ & $256$ & $1.03\%$ & $7$ & $\{5,4\}$ \\
\bottomrule
\end{tabular}
\vspace{-1mm}
\end{table}

\section{Numerical Results}\label{sec:sim}
As reported in Tables~\ref{tab:ag_catalog} and~\ref{tab:pg_catalog}, for both affine and projective geometry codes, the pencil-selected matrix~$\mathbf{H}_p$ retains only a fraction~$\rho$ of the edges in the full incidence matrix~$\mathbf{H}$, and the row weight can be further reduced from $w_r$ to $\tilde{w}_r$ after matrix sparsification. Fig.~\ref{fig:fer} presents the BLER performance of flooding BP on the ssPCM for four FG codes. The results show fast convergence, no visible error floor, and around $0.5$~dB gain over corresponding 5G LDPC codes at a BLER of $10^{-7}$. 

\section{Conclusion}
In this paper, we propose a geometric method to derive LDPC representations for FG codes. We solve the equivalent pencil selection problem using lifted Gabidulin codes, and verify rank preservation for FG codes up to length~$1024$ by computation. This work extends the availability of FG codes for BP decoding, which enables them to combine the algebraic distance properties with iterative soft-decision decoding.

\section*{Acknowledgment}
This work was supported by the Swiss State Secretariat for Education, Research, and Innovation under the SwissChips initiative. 
The authors thank Zongyao Li for helpful discussions.
\bibliographystyle{IEEEtran}
\bibliography{IEEEabrv,mybib} 

\end{document}

%% file: figures/Fig1_128_70.tex
\usepgfplotslibrary{groupplots}
\begin{tikzpicture}
\definecolor{myblued}{RGB}{0, 114, 178}
\definecolor{myred}{RGB}{213, 94, 0}
\definecolor{myredd}{RGB}{248,74,173}
\definecolor{myyellow}{RGB}{237,137,32}
\definecolor{mypurple}{RGB}{142,108,212}
\definecolor{myblues}{RGB}{77,190,238}
\definecolor{mygreen}{RGB}{0,160,135}
  \pgfplotsset{
    label style = {font=\fontsize{9pt}{7.2}\selectfont},
    tick label style = {font=\fontsize{7pt}{7.2}\selectfont}
  }

\usetikzlibrary{
    matrix,
}
\begin{axis}[
	scale = 1,
    ymode=log,
    xlabel={$E_b/N_0$ [\text{dB}]}, xlabel style={yshift=0.5em},
    ylabel={BLER}, ylabel style={yshift=-0.15em},
    title style={yshift=-7.5pt},
    grid=both,
    ymajorgrids=true,
    xmajorgrids=true,
    xmin=2,xmax=6,
    ymin=1E-07,ymax=1,
    ytick={0.00000001,0.0000001,0.000001,0.00001,0.0001,0.001,0.01,0.1,1},
    yticklabels={$10^{-8}$,$10^{-7}$,$10^{-6}$,$10^{-5}$,$10^{-4}$,$10^{-3}$,$10^{-2}$,$10^{-1}$,$10^{0}$},
    xtick={2,3,4,5,6,7,8},
    xticklabels={$2$,$3$,$4$,$5$,$6$,$7$,$8$},
    grid style=dashed,
    xshift=-1\textwidth,
    width=0.30\textwidth, height=4.5cm,
    thick,
    legend style={
      nodes={scale=1, transform shape},
      at={(2.00,-0.335)},
      anchor={center},
      cells={anchor=west},
      column sep= 2.1mm,
      row sep= -1mm,
      font=\fontsize{6.9pt}{7.2}\selectfont,
    },
    legend columns=5,
    ]

\addplot[
    color=myred,
    mark=*,
    fill opacity=0,
    line width=0.31mm,
    mark size=1.5,
]
table {
2.000000    3.5842e-01
2.250000    2.3155e-01
2.500000    1.3587e-01
2.750000    8.2988e-02
3.000000    5.1329e-02
3.250000    2.7647e-02
3.500000    1.3463e-02
3.750000    6.4792e-03
4.000000    2.8709e-03
4.250000    1.2387e-03
4.500000    4.7451e-04
4.750000    1.8623e-04
5.000000    7.1478e-05
5.250000    2.1687e-05
5.500000    7.5234e-06
5.750000    1.8268e-06
6.000000    5.1767e-07
};
\addlegendentry{5G LDPC BP, $10$~iterations}

\addplot[
    dashed,
    color=myred,
    mark=star,
    line width=0.31mm,
    mark size=2,
]
table {
2.00    1.3300e-01
2.25    8.8574e-02
2.50    4.8921e-02
2.75    2.3842e-02
3.00    1.1643e-02
3.25    5.3772e-03
3.50    2.1971e-03
3.75    9.8140e-04
4.00    3.8558e-04
4.25    1.4339e-04
4.50    4.5595e-05
4.75    1.7156e-05
5.00    6.3701e-06
5.25    2.0336e-06
5.50    7.2157e-07
5.75    2.1766e-07
6.00    8.9194e-08
};
\addlegendentry{5G LDPC BP, $200$~iterations}

\addplot[
    color=myblued,
    mark=*,
    fill opacity=0,
    line width=0.31mm,
    mark size=1.5,
]
table {
2.000000    9.165903E-02
2.250000    5.966587E-02
2.500000    2.998051E-02
2.750000    1.635189E-02
3.000000    8.386096E-03
3.250000    3.853045E-03
3.500000    1.673318E-03
3.750000    6.452737E-04
4.000000    2.939067E-04
4.250000    9.257802E-05
4.500000    2.759889E-05
4.750000    9.035615E-06
5.000000    2.353153E-06
5.250000    6.217863E-07
5.500000    1.368670E-07
5.600000    8.633641E-08
};
\addlegendentry{FG BP, $10$~iterations}

\addplot[
    dashed,
    color=myblued,
    mark=star,
    fill opacity=0,
    line width=0.31mm,
    mark size=2,
]
table {
2.000000    8.857396E-02
2.250000    5.431831E-02
2.500000    2.955919E-02
2.750000    1.589067E-02
3.000000    8.067444E-03
3.250000    3.717610E-03
3.500000    1.570956E-03
3.750000    6.336030E-04
4.000000    2.797982E-04
4.250000    9.006810E-05
4.500000    2.715601E-05
4.750000    8.382521E-06
5.000000    2.328026E-06
5.250000    6.373355E-07
5.500000    1.243177E-07
};
\addlegendentry{FG BP, $200$~iterations}

\addplot[
    color=black,
    line width=0.31mm,
]
table {
1	0.102689677840519
1.10000000000000	0.0838004132460626
1.20000000000000	0.0674936370913222
1.30000000000000	0.0536193244083882
1.40000000000000	0.0419907749304545
1.50000000000000	0.0323950953629262
1.60000000000000	0.0246037667904041
1.70000000000000	0.0183826934487477
1.80000000000000	0.0135012151309750
1.90000000000000	0.00973968626279041
2	0.00689536709410712
2.10000000000000	0.00478652182275886
2.20000000000000	0.00325476025217670
2.30000000000000	0.00216578107488117
2.40000000000000	0.00140876659330466
2.50000000000000	0.000894735320003512
2.60000000000000	0.000554179591827080
2.70000000000000	0.000334303391183011
2.80000000000000	0.000196137581152757
2.90000000000000	0.000111754354312665
3	6.17391099148114e-05
3.10000000000000	3.30147649975997e-05
3.20000000000000	1.70574641320786e-05
3.30000000000000	8.49821713759520e-06
3.40000000000000	4.07412320353951e-06
3.50000000000000	1.87521478750640e-06
3.60000000000000	8.26649807726359e-07
3.70000000000000	3.48103129138328e-07
3.80000000000000	1.39631641449300e-07
3.90000000000000	5.31895882940531e-08
4	1.91782182808796e-08
4.10000000000000	6.52207133063432e-09
4.20000000000000	2.08395958318834e-09
4.30000000000000	6.23034768496742e-10
4.40000000000000	1.73498604489000e-10
};
\addlegendentry{Dispersion bound}

\node[right, align=left]
at (axis cs:4.3,2.51E-01) {\small{$\bm{(128,64)}$}};

\node[right, align=left]
at (axis cs:4.07,3.91E-02) {\footnotesize{$\bm{\mathrm{AG}(7,1,3)}$}};

\end{axis}

\begin{axis}[
	scale = 1,
    ymode=log,
    xlabel={$E_b/N_0$ [\text{dB}]}, xlabel style={yshift=0.5em},
    title style={yshift=-7.5pt},
    grid=both,
    ymajorgrids=true,
    xmajorgrids=true,
    xmin=2,xmax=6,
    ymin=1E-07,ymax=1,
    ytick={0.00000001,0.0000001,0.000001,0.00001,0.0001,0.001,0.01,0.1,1},
    yticklabels={},
    xtick={2,3,4,5,6,7,8},
    xticklabels={$2$,$3$,$4$,$5$,$6$,$7$,$8$},
    grid style=dashed,
    xshift=-0.77\textwidth,
    width=0.30\textwidth, height=4.5cm,
    thick,
    legend style={
      nodes={scale=1, transform shape},
      at={(0.50,-0.33)},
      anchor={center},
      cells={anchor=west},
      column sep= 1.6mm,
      row sep= -1mm,
      font=\fontsize{6.9pt}{7.2}\selectfont,
    },
    legend columns=2,
    ]

\addplot[
    color=myred,
    mark=*,
    fill opacity=0,
    line width=0.31mm,
    mark size=1.5,
]
table { 
2.000000    2.9326e-01
2.250000    1.5823e-01
2.500000    8.4388e-02
2.750000    3.6311e-02
3.000000    1.6844e-02
3.250000    6.6525e-03
3.500000    2.2990e-03
3.750000    6.6499e-04
4.000000    1.8371e-04
4.250000    4.6333e-05
4.500000    9.3050e-06
4.750000    1.9187e-06
5.000000    3.7022e-07
5.250000    7.8464e-08
};

\addplot[
    dashed,
    color=myred,
    mark=star,
    fill opacity=0,
    line width=0.31mm,
    mark size=2.0,
]
table {
2.00    5.5453e-02
2.25    2.5246e-02
2.50    9.8990e-03
2.75    3.1791e-03
3.00    1.0686e-03
3.25    3.0700e-04
3.50    1.0154e-04
3.75    2.7584e-05
4.00    8.0692e-06
4.25    2.5605e-06
4.50    6.4104e-07
4.75    1.6113e-07
5.00    4.3233e-08
};

\addplot[
    color=myblued,
    mark=*,
    fill opacity=0,
    line width=0.31mm,
    mark size=1.5,
]
table {
2.000000    3.327787E-01
2.250000    2.178649E-01
2.500000    1.112966E-01
2.750000    4.940371E-02
3.000000    2.243292E-02
3.250000    6.844159E-03
3.500000    1.925224E-03
3.750000    4.280758E-04
4.000000    7.411657E-05
4.250000    1.017890E-05
4.500000    9.428531E-07
4.75000     8.665534E-08
};

\addplot[
    dashed,
    color=myblued,
    mark=star,
    line width=0.31mm,
    mark size=2.0,
]
table {
2.000000    1.912046E-01
2.250000    8.988764E-02
2.500000    3.513086E-02
2.750000    1.323627E-02
3.000000    4.380489E-03
3.250000    1.292048E-03
3.500000    2.369371E-04
3.750000    4.279693E-05
4.000000    5.974677E-06
4.250000    6.659217E-07
4.500000    4.597786E-08
};

\addplot[
    color=black,
    line width=0.31mm,
]
table {
1	0.0568876153945202
1.10000000000000	0.0408242644259592
1.20000000000000	0.0285488085615721
1.30000000000000	0.0194324738223476
1.40000000000000	0.0128593851299216
1.50000000000000	0.00826272493740595
1.60000000000000	0.00514838613329766
1.70000000000000	0.00310648592116171
1.80000000000000	0.00181256234717881
1.90000000000000	0.00102113383659987
2	0.000554551916899665
2.10000000000000	0.000289824218887940
2.20000000000000	0.000145505337852998
2.30000000000000	7.00399354128812e-05
2.40000000000000	3.22592024702690e-05
2.50000000000000	1.41861449011927e-05
2.60000000000000	5.94262453276114e-06
2.70000000000000	2.36553205752671e-06
2.80000000000000	8.92436145177337e-07
2.90000000000000	3.18204147574173e-07
3	1.06908519384211e-07
3.10000000000000	3.37366193494716e-08
3.20000000000000	9.96502018252322e-09
3.30000000000000	2.74497096855909e-09
3.40000000000000	7.02356105276124e-10
3.50000000000000	1.66220706543072e-10
3.60000000000000	3.62183754601958e-11
};

\node[right, align=left]
at (axis cs:4.1,2.51E-01) {\small{$\bm{(256,127)}$}};

\node[right, align=left]
at (axis cs:4.07,3.91E-02) {\footnotesize{$\bm{\mathrm{AG}(4,2,1)}$}};

\end{axis}

\begin{axis}[
	scale = 1,
    ymode=log,
    xlabel={$E_b/N_0$ [\text{dB}]}, xlabel style={yshift=0.5em},
    title style={yshift=-7.5pt},
    grid=both,
    ymajorgrids=true,
    xmajorgrids=true,
    xmin=2,xmax=6,
    ymin=1E-07,ymax=1,
    ytick={0.00000001,0.0000001,0.000001,0.00001,0.0001,0.001,0.01,0.1,1},
    yticklabels={},
    xtick={2,3,4,5,6,7,8},
    xticklabels={$2$,$3$,$4$,$5$,$6$,$7$,$8$},
    grid style=dashed,
    xshift=-0.54\textwidth,
    width=0.30\textwidth, height=4.5cm,
    thick,
    legend style={
      nodes={scale=1, transform shape},
      at={(0.50,-0.33)},
      anchor={center},
      cells={anchor=west},
      column sep= 1.6mm,
      row sep= -1mm,
      font=\fontsize{6.9pt}{7.2}\selectfont,
    },
    legend columns=2,
    ]

\addplot[
    color=myred,
    mark=*,
    fill opacity=0,
    line width=0.31mm,
    mark size=1.5,
]
table {
2.000000    3.9683e-01
2.250000    2.3248e-01
2.500000    1.1669e-01
2.750000    5.3390e-02
3.000000    2.2060e-02
3.250000    6.3239e-03
3.500000    2.0009e-03 
3.750000    4.9836e-04 
4.000000    1.2233e-04
4.250000    2.1509e-05
4.500000    5.2043e-06 
4.750000    1.3195e-06
5.000000    2.8151e-07
5.250000    7.6069e-08
};

\addplot[
    dashed, 
    color=myred,
    mark=star,
    line width=0.31mm, 
    mark size=2,
]
table {
2.00    8.1900e-02
2.25    2.9824e-02
2.50    1.0861e-02
2.75    3.1972e-03
3.00    1.0495e-03
3.25    2.8667e-04
3.50    8.6065e-05
3.75    2.0644e-05
4.00    6.1185e-06
4.25    2.0322e-06
4.50    5.7387e-07
4.75    1.7957e-07
4.90    9.3618e-08
};

\addplot[
    color=myblued,
    mark=*,
    fill opacity=0,
    line width=0.31mm,
    mark size=1.5,
]
table {
2.000000    5.988024E-01
2.250000    4.237288E-01
2.500000    2.370694E-01
2.750000    1.150086E-01
3.000000    5.005005E-02
3.250000    1.650710E-02
3.500000    4.750594E-03
3.750000    8.736867E-04
4.000000    1.218252E-04
4.250000    1.742621E-05
4.500000    1.125779E-06
4.750000    5.105560E-08
};

\addplot[
    dashed,
    color=myblued,
    mark=star,
    fill opacity=0,
    line width=0.31mm,
    mark size=2.0,
]
table {
2.000000    4.140787E-01
2.250000    2.383790E-01
2.500000    1.048768E-01
2.750000    4.737091E-02
3.000000    1.502065E-02
3.250000    4.117683E-03
3.500000    6.987192E-04
3.750000    1.093714E-04
4.000000    1.120145E-05
4.250000    6.920243E-07
4.500000    4.539559E-08
};

\addplot[
    color=black,
    line width=0.31mm,
]
table {
1	0.124564369690168
1.10000000000000	0.0896406704160200
1.20000000000000	0.0623443182193662
1.30000000000000	0.0418317234776516
1.40000000000000	0.0270302309541796
1.50000000000000	0.0167889321145683
1.60000000000000	0.0100043925364167
1.70000000000000	0.00570798877977754
1.80000000000000	0.00311164820615339
1.90000000000000	0.00161717605244398
2	0.000799422381926198
2.10000000000000	0.000374957882601118
2.20000000000000	0.000166435778708213
2.30000000000000	6.97218430547835e-05
2.40000000000000	2.74834266024741e-05
2.50000000000000	1.01622280021145e-05
2.60000000000000	3.51289600037160e-06
2.70000000000000	1.13120010459262e-06
2.80000000000000	3.38016941847370e-07
2.90000000000000	9.33397591902390e-08
3	2.37134185188029e-08
3.10000000000000	5.51625632390289e-09
3.20000000000000	1.16890920618835e-09
3.30000000000000	2.24383712824010e-10
3.40000000000000	3.87857924938647e-11
3.50000000000000	5.99807283044217e-12
3.60000000000000	8.24073366480392e-13
};

\node[right, align=left]
at (axis cs:4.1,2.51E-01) {\small{$\bm{(341, 195)}$}};

\node[right, align=left]
at (axis cs:4.07,3.91E-02) {\footnotesize{$\bm{\mathrm{PG}(4,2,1)}$}};

\end{axis}

\begin{axis}[
	scale = 1,
    ymode=log,
    xlabel={$E_b/N_0$ [\text{dB}]}, xlabel style={yshift=0.5em},
    title style={yshift=-7.5pt},
    grid=both, 
    ymajorgrids=true,
    xmajorgrids=true,
    xmin=3,xmax=7,
    ymin=1E-07,ymax=1,
    ytick={0.00000001,0.0000001,0.000001,0.00001,0.0001,0.001,0.01,0.1,1},
    yticklabels={},
    xtick={3,4,5,6,7,8},
    xticklabels={$3$,$4$,$5$,$6$,$7$,$8$},
    grid style=dashed,
    xshift=-0.31\textwidth,
    width=0.30\textwidth, height=4.5cm,
    thick,
    legend style={
      nodes={scale=1, transform shape},
      at={(0.50,-0.25)},
      anchor={center},
      cells={anchor=west},
      column sep= 1.6mm,
      row sep= -1mm,
      font=\fontsize{6.9pt}{7.2}\selectfont,
    },
    legend columns=2,
    ]

\addplot[
    color=myred,
    mark=*,
    fill opacity=0,
    line width=0.31mm,
    mark size=1.5,
]
table {
2.00    1.0000e+00
2.25    1.0000e+00
2.50    1.0000e+00
2.75    9.9010e-01
3.00    9.4340e-01
3.25    8.7719e-01
3.50    7.9365e-01
3.75    5.8824e-01
4.00    3.5088e-01
4.25    1.8939e-01
4.50    7.9220e-02
4.75    3.1047e-02
5.00    7.6834e-03
5.25    1.8675e-03
5.50    3.5289e-04
5.75    6.7177e-05
6.00    1.4414e-05
6.25    2.6118e-06 
6.50    5.5569e-07
6.75    1.5459e-07
};

\addplot[
    dashed,
    color=myred,
    mark=star,
    fill opacity=0,
    line width=0.31mm,
    mark size=2,
]
table {
2.00    1.0000e+00
2.25    1.0000e+00
2.50    1.0000e+00
2.75    9.4340e-01
3.00    8.6207e-01
3.25    6.9444e-01
3.50    4.6729e-01
3.75    2.6596e-01
4.00    1.1919e-01
4.25    3.9841e-02
4.50    1.1232e-02
4.75    3.1419e-03
5.00    4.9030e-04
5.25    9.3129e-05
5.50    1.6686e-05
5.75    4.2443e-06
6.00    9.8508e-07
6.25    3.0544e-07
6.50    7.3566e-08
};

\addplot[
    color=myblued,
    mark=*,
    fill opacity=0,
    line width=0.31mm,
    mark size=1.5,
]
table {
2.000000    1.000000E+00
2.250000    1.000000E+00
2.500000    9.950249E-01
2.750000    9.389671E-01
3.000000    8.620690E-01
3.250000    7.326007E-01
3.500000    5.665103E-01
3.750000    3.264557E-01
4.000000    1.651528E-01
4.250000    5.913661E-02
4.500000    1.877758E-02
4.750000    3.730021E-03
5.000000    6.734846E-04
5.250000    9.033867E-05
5.500000    1.310781E-05
5.750000    1.586909E-06
6.000000    1.909826E-07
6.100000    6.748480E-08
};

\addplot[
    dashed,
    color=myblued,
    mark=star,
    fill opacity=0,
    line width=0.31mm,
    mark size=2,
]
table {
2.000000    1.000000E+00
2.250000    9.950249E-01
2.500000    9.803922E-01
2.750000    9.049774E-01
3.000000    7.812500E-01
3.250000    6.172840E-01
3.500000    4.405286E-01
3.750000    2.063983E-01
4.000000    9.433962E-02
4.250000    2.848191E-02
4.500000    6.946133E-03
4.750000    1.470653E-03
5.000000    2.660519E-04
5.250000    4.209365E-05
5.500000    6.532639E-06
5.750000    9.311812E-07
6.000000    1.553956E-07
};

\addplot[
    color=black,
    line width=0.31mm,
]
table {
1	0.999989183540688
1.10000000000000	0.999972504809139
1.20000000000000	0.999932912706821
1.30000000000000	0.999842864473086
1.40000000000000	0.999646665227469
1.50000000000000	0.999237231270701
1.60000000000000	0.998419112660029
1.70000000000000	0.996854356844155
1.80000000000000	0.993990871850521
1.90000000000000	0.988979611815858
2	0.980597549797872
2.10000000000000	0.967206513963807
2.20000000000000	0.946789140498408
2.30000000000000	0.917105140294422
2.40000000000000	0.875995559172770
2.50000000000000	0.821824836797236
2.60000000000000	0.753994217201590
2.70000000000000	0.673401746048587
2.80000000000000	0.582690602661830
2.90000000000000	0.486145985445597
3	0.389183180959476
3.10000000000000	0.297498934085140
3.20000000000000	0.216088087878746
3.30000000000000	0.148398306451768
3.40000000000000	0.0958658544919880
3.50000000000000	0.0579478907226179
3.60000000000000	0.0325940220864220
3.70000000000000	0.0169591765792077
3.80000000000000	0.00811134720464714
3.90000000000000	0.00354187383639782
4	0.00140148487483864
4.10000000000000	0.000498435349997580
4.20000000000000	0.000157899136575210
4.30000000000000	4.41129038740556e-05
4.40000000000000	1.07484771624745e-05
4.50000000000000	2.25605124205736e-06
4.60000000000000	4.02310624389936e-07
4.70000000000000	6.00132630081265e-08
4.80000000000000	7.35930945565442e-09
4.90000000000000	7.27461384429167e-10
};

\node[right, align=left]
at (axis cs:5.1,2.51E-01) {\small{$\bm{(585,520)}$}};

\node[right, align=left]
at (axis cs:5.07,3.91E-02) {\footnotesize{$\bm{\mathrm{PG}(3,3,1)}$}};

\end{axis}

\end{tikzpicture}

%% file: IEEEabrv.bib
@STRING{IEEE_J_COML       = "{IEEE} Commun. Lett."}

@STRING{IEEE_J_JSAC       = "{IEEE} J. Sel. Areas Commun."}

@STRING{IEEE_J_COM        = "{IEEE} Trans. Commun."}

@STRING{IEEE_J_IT         = "{IEEE} Trans. Inf. Theory"}

@STRING{IEEE_J_PROC       = "Proc. {IEEE}"}

@STRING{IEEE_M_COM        = "{IEEE} Commun. Mag."}


%% file: mybib.bib
@book{halsted1896synthetic,
  author    = {Halsted, George Bruce},
  title     = {Synthetic Projective Geometry},
  series    = {Mathematical Monographs},
  number    = {2},
  edition   = {4},
  publisher = {John Wiley \& Sons},
  address   = {New York, NY, USA},
  year      = {1906}
}

@article{huang2009two,
  title={Two reliability-based iterative majority-logic decoding algorithms for {LDPC} codes},
  author={Huang, Qin and Kang, Jingyu and Zhang, Li and Lin, Shu and Abdel-Ghaffar, Khaled},
  journal=IEEE_J_COM,
  volume={57},
  number={12},
  pages={3597--3606},
  year={2009},
  month={Dec.},
  publisher={IEEE}
}

@inproceedings{chen2018finite,
  title={Finite Hyperplane Codes: Minimum Distance and Majority-Logic Decoding},
  author={Chen, Chao and Liu, Haiyang and Bai, Baoming},
  booktitle={Proc. IEEE Int. Symp. Inf. Theory (ISIT)},
  pages={2500--2504},
  year={2018},
  organization={IEEE}
}

@article{liu2012smallest,
  title={On the smallest absorbing sets of {LDPC} codes from finite planes},
  author={Liu, Haiyang and Li, Yan and Ma, Lianrong and Chen, Jie},
  journal=IEEE_J_IT,
  volume={58},
  number={6},
  pages={4014--4020},
  year={2012},
  month={Jun.},
  publisher={IEEE}
}

@article{smarandache2007pseudo,
  title={Pseudo-Codeword Analysis of {Tanner} Graphs From Projective and {Euclidean} Planes},
  author={Smarandache, Roxana and Vontobel, Pascal O.},
  journal=IEEE_J_IT,
  volume={53},
  number={7},
  pages={2376--2393},
  year={2007},
  month={Jul.},
  doi={10.1109/TIT.2007.899563},
  publisher={IEEE}
}

@article{kamiya2007high,
  title={High-rate quasi-cyclic low-density parity-check codes derived from finite affine planes},
  author={Kamiya, Norifumi},
  journal=IEEE_J_IT,
  volume={53},
  number={4},
  pages={1444--1459},
  year={2007},
  month={Apr.},
  publisher={IEEE}
}

@article{kamiya2006quasi,
  title={Quasi-Cyclic Codes from a Finite Affine Plane},
  author={Kamiya, Norifumi and Fossorier, Marc P. C.},
  journal={Des. Codes Cryptogr.},
  volume={38},
  number={3},
  pages={311--329},
  year={2006},
  month={Mar.},
  publisher={Springer}
}

@article{tang2005codes,
  title={Codes on finite geometries},
  author={Tang, Heng and Xu, Jun and Lin, Shu and Abdel-Ghaffar, Khaled AS},
  journal=IEEE_J_IT,
  volume={51},
  number={2},
  pages={572--596},
  year={2005},
  month={Feb.},
  publisher={IEEE}
}

@article{kou2002low,
  title={Low-density parity-check codes based on finite geometries: a rediscovery and new results},
  author={Kou, Yu and Lin, Shu and Fossorier, Marc P. C.},
  journal=IEEE_J_IT,
  volume={47},
  number={7},
  pages={2711--2736},
  month={Nov.},
  year={2001},
  publisher={IEEE}
}

@article{lin1973multifold,
  title={Multifold {E}uclidean geometry codes},
  author={Lin, Shu},
  journal=IEEE_J_IT,
  volume={19},
  number={4},
  pages={537--548},
  year={1973},
  month={Jul.},
  publisher={IEEE}
}

@mastersthesis{rudolph1964,
  author={Rudolph, L. D.},
  title={Geometric Configuration and Majority Logic Decoding Codes},
  school={Univ. Oklahoma},
  address={Norman, OK, USA},
  year={1964}
}

@techreport{smith1967,
  author={Smith, K. J. C.},
  title={Majority Decodable Codes Derived from Finite Geometries},
  institution={Univ. North Carolina},
  address={Chapel Hill, NC, USA},
  number={561},
  type={Institute of Statistics Mimeo Series},
  year={1967}
}

@incollection{weldon1967euclidean,
  author={Weldon Jr., E. J.},
  title={Euclidean Geometry Cyclic Codes},
  booktitle={Combinatorial Mathematics and its Applications},
  editor={Bose, R. C. and Dowling, T. A.},
  publisher={Univ. North Carolina Press},
  address={Chapel Hill, NC, USA},
  pages={377--388},
  year={1967}
}

@article{delsarte1969geometric,
  author={Delsarte, Philippe},
  title={A Geometric Approach to a Class of Cyclic Codes},
  journal={J. Combin. Theory},
  volume={6},
  number={4},
  pages={340--358},
  month={May},
  year={1969},
  doi={10.1016/S0021-9800(69)80030-X}
}

@techreport{hartmann1972generalized,
  author={Hartmann, Carlos R. P. and Rudolph, Luther D.},
  title={Generalized Finite-Geometry Codes},
  institution={Syracuse Univ.},
  type={Electrical Engineering and Computer Science Technical Report},
  number={32},
  address={Syracuse, NY, USA},
  month={Apr.},
  year={1972}
}

@article{hartmann1974structure,
  author={Hartmann, Carlos R. P. and Ducey, James B. and Rudolph, Luther D.},
  title={On the Structure of Generalized Finite-Geometry Codes},
  journal=IEEE_J_IT,
  volume={20},
  number={2},
  pages={240--252},
  month={Mar.},
  year={1974},
  publisher={IEEE}
}

@inproceedings{shen2025belief,
  title={Belief propagation decoding for short codes on structured sparse parity-check matrices},
  author={Shen, Yifei and Li, Zongyao and Boutillon, Emmanuel and Song, Wenqing and Ren, Yuqing and Zhang, Chuan and You, Xiaohu and Burg, Andreas},
  booktitle={Proc IEEE Int. Symp. Inf. Theory (ISIT)},
  pages={1--6},
  year={2025}
}

@inproceedings{mackay2000relationships,
  author    = {David J. C. MacKay},
  title     = {Relationships between Sparse Graph Codes},
  booktitle = {Proc. Workshop Inf.-Based Induction Sci. (IBIS 2000)},
  month     = jul,
  year      = {2000},
  pages     = {257--270}
}

@article{dolecek2009analysis,
  title={Analysis of absorbing sets and fully absorbing sets of array-based {LDPC} codes},
  author={Dolecek, Lara and Zhang, Zhengya and Anantharam, Venkat and Wainwright, Martin J and Nikolic, Borivoje},
  journal=IEEE_J_IT,
  volume={56},
  number={1},
  pages={181--201},
  year={2009},
  month={Dec.},
  publisher={IEEE}
}

@article{shirvanimoghaddam2018short,
  title={Short block-length codes for ultra-reliable low latency communications},
  author={Shirvanimoghaddam, Mahyar and Mohammadi, Mohammad Sadegh and Abbas, Rana and Minja, Aleksandar and Yue, Chentao and Matuz, Balazs and Han, Guojun and Lin, Zihuai and Liu, Wanchun and Li, Yonghui and Johnson, Sarah and Vucetic, Branka},
  journal=IEEE_M_COM,
  volume={57},
  number={2},
  pages={130--137},
  year={Feb. 2019},
  publisher={IEEE}
}

@article{sankaranarayanan2005iterative,
  title={Iterative decoding of linear block codes: A parity-check orthogonalization approach},
  author={Sankaranarayanan, Sundararajan and Vasi{\'c}, Bane},
  journal=IEEE_J_IT,
  volume={51},
  number={9},
  pages={3347--3353},
  year={Sept. 2005},
  publisher={IEEE}
}

@inproceedings{yedidia2002generating,
  title={Generating code representations suitable for belief propagation decoding},
  author={Yedidia, Jonathan S and Chen, Jinghu and Fossorier, Marc P. C.},
  booktitle={Proc. Annual Allerton Conf. Commun. Control Comput.},
  volume={40},
  number={1},
  pages={447--456},
  year={2002}
}

@article{kumar2005graphical,
  title={On graphical representations of algebraic codes suitable for iterative decoding},
  author={Kumar, Vidya and Milenkovic, Olgica},
  journal=IEEE_J_COML,
  volume={9},
  number={8},
  pages={729--731},
  year={Aug. 2005},
  publisher={IEEE}
}

@ARTICLE{miao2024trends,
  author={Miao, Sisi and Kestel, Claus and Johannsen, Lucas and Geiselhart, Marvin and Schmalen, Laurent and Balatsoukas-Stimming, Alexios and Liva, Gianluigi and Wehn, Norbert and ten Brink, Stephan},
  journal=IEEE_J_PROC, 
  title={Trends in Channel Coding for {6G}}, 
  year={Jul. 2024},
  volume={112},
  number={7},
  pages={653-675}
}

@article{shen2025toward,
  title={Toward Universal Belief Propagation Decoding for Short Binary Block Codes},
  author={Shen, Yifei and Li, Zongyao and Ren, Yuqing and Boutillon, Emmanuel and Balatsoukas-Stimming, Alexios and Zhang, Chuan and You, Xiaohu and Burg, Andreas},
  journal=IEEE_J_JSAC,
  volume={43},
  number={4},
  pages={1135-1152},
  month={Feb.},
  year={2025}
}

@article{Silva2008,
  author={Silva, Danilo and Kschischang, Frank R. and K{\"o}tter, Ralf},
  title={A rank-metric approach to error control in random network coding},
  journal=IEEE_J_IT,
  volume={54},
  number={9},
  pages={3951--3967},
  month={Sep.},
  year={2008},
  doi={10.1109/TIT.2008.928291}
}

@article{Gabidulin1985,
  author={Gabidulin, Ernst M.},
  title={Theory of codes with maximum rank distance},
  journal={Probl. Inf. Transm.},
  volume={21},
  number={1},
  pages={1--12},
  year={1985}
}

@book{LinCostello2004,
  author={Lin, Shu and Costello, Daniel J.},
  title={Error Control Coding},
  edition={2},
  publisher={Prentice Hall},
  address={Upper Saddle River, NJ, USA},
  year={2004}
}

@article{delsarte1970generalized,
  title={On generalized {Reed-Muller} codes and their relatives},
  author={Delsarte, Philippe and Goethals, Jean-Marie and MacWilliams, F. Jessie},
  journal={Inf. Control},
  volume={16},
  number={5},
  pages={403--442},
  year={1970},
  month={Jul.},
  publisher={Elsevier}
}

@article{yu2025ordered,
  author={Yu, Shuyan and Zhang, Bin and Huang, Qin},
  title={Ordered Statistics Derivative Decoding for Affine-Invariant Codes Without {Gaussian} Elimination},
  journal=IEEE_J_COM,
  volume={73},
  number={11},
  pages={11006--11016},
  month={Nov.},
  year={2025},
  doi={10.1109/TCOMM.2025.3578821}
}

@inproceedings{shen2026hdpc,
  title  = {{HDPC} Codes with {LDPC} Matrices: {Construction} Based on Social Golfer Problem},
  author={Shen, Yifei and {\"U}nal, Hasan Said and Burg, Andreas},
  booktitle={Proc. IEEE Int. Symp. Inf. Theory (ISIT)},
  pages={1--6},
  year={2026}
}
